\theoremstyle{plain}
\theoremstyle{plain}
\newtheorem{lem}{Lemma}
\theoremstyle{definition}
\newtheorem{defn}{Definition}
\newtheorem{fact}{Fact}
\theoremstyle{remark}
\begin{document}

\title{A fault-tolerant non-Clifford gate for the surface code in two dimensions}

\author{Benjamin J. Brown}
\affiliation{Centre for Engineered Quantum Systems, School of Physics, University of Sydney, Sydney, New South Wales 2006, Australia}

\begin{abstract}
Fault-tolerant logic gates will consume a large proportion of the resources of a two-dimensional quantum computing architecture. Here we show how to perform a fault-tolerant non-Clifford gate with the surface code; a quantum error-correcting code now under intensive development. This alleviates the need for distillation or higher-dimensional components to complete a universal gate set. The operation uses both local transversal gates and code deformations over a time that scales with the size of the qubit array. An important component of the gate is a just-in-time decoder. These decoding algorithms allow us to draw upon the advantages of three-dimensional models using only a two-dimensional array of live qubits. Our gate is completed using parity checks of weight no greater than four. We therefore expect it to be amenable with near-future technology. As the gate circumvents the need for magic-state distillation, it may reduce the resource overhead of surface-code quantum computation considerably.
\end{abstract}

\maketitle

\section{Introduction}

A scalable quantum computer is expected solve difficult problems that are intractable with classical technology~\cite{Harrow17}. Scaling such a machine to a useful size will necessarily require fault-tolerant components that protect quantum information as the data is processed~\cite{Calderbank96, Steane96, Dennis02, Terhal15, Brown16, Campbell17}. If we are to see the realisation of a quantum computer, its design must respect the constraints of the quantum architecture that can be prepared in the laboratory. In many cases, for instance superconducting qubits~\cite{Raussendorf07, DiVincenzo09, Fowler12a, Corcoles15, Kelly15, Takita16}, this restricts us to two-dimensional architectures. 

Leading candidate models for fault-tolerant quantum computation are based on the surface code~\cite{Kitaev03, Dennis02} due to its high threshold~\cite{Dennis02, Wang03, Raussendorf07, Fowler12a} and multitude of ways of performing Clifford gates~\cite{Raussendorf06, Bombin09, Hastings15, Brown17, Yoder17}. Universal quantum computation is possible if this gate set is supplemented by a non-Clifford gate. Among the most feasible approaches to realise a non-Clifford gate is by the use of magic-state distillation~\cite{Bravyi05, Gidney19, Litinski19}. However, this is somewhat prohibitive as it is estimated that a large fraction of the resources of a quantum computer will be expended by these protocols~\cite{Fowler12a, OGorman17}.

Here we provide a promising alternative to magic-state distillation with the surface code. Remarkably, we show that we can perform a fault-tolerant non-Clifford gate with three overlapping copies of the surface code that interact locally. Each of the two-dimensional arrays of live qubits replicates a copy of the three-dimensional generalisation of the surface code over a time that scales with the size of the array. We use that the full three-dimensional model is natively capable of performing a controlled-controlled-phase gate~\cite{Kubica15, Vasmer18} to realise a two-dimensional non-Clifford gate. The procedure makes essential use of just-in-time gauge fixing; a concept recently introduced by Bomb\'{i}n in Ref.~\cite{Bombin18}. This enables us to recover the three-dimensional surface-code model using parity measurements of weight no greater than four. Research into such technology is presently under intensive development~\cite{Corcoles15, Kelly15, Takita16}, as these are the minimal requirements to realise the surface-code model.

The non-Clifford gate presented here circumvents fundamental limitations of two-dimensional models~\cite{Eastin09, Bravyi13a, Pastawski15, OConnor18, Webster18} by dynamically preparing a three-dimensional system using a two-dimensional array of active qubits. In the past, there has been a significant effort to realise a non-Clifford gate with two-dimensional quantum error-correcting codes~\cite{Bravyi15, OConnor16, Jones16, Yoder16, Yoder17a}. However, these proposals are unlikely to function reliably as the size of the system diverges. It is remarked in Ref.~\cite{Bombin18} that we should understand fault-tolerant quantum operations, not in terms of quantum error-correcting codes, but instead by the processes they perform. Notably, in our scheme, error-detecting measurements are realised dynamically. This is in contrast to the more conventional approach where we make stabilizer measurements on static quantum error-correcting codes to identify errors. As we will see, the process is well characterised by connecting the surface code with the topological cluster-state model~\cite{Raussendorf05, Raussendorf06, Raussendorf07, Raussendorf07a}, a measurement-based model with a finite threshold error rate below which it will function reliably at a suitably large system size. Furthermore, as we will see, the cluster state offers a natural static language to characterise the dynamical quantum process using a time-independent entangled resource state.

We begin by defining measurement-based model, and we explain how we project the non-Clifford gate onto a two-dimensional surface. We finally discuss the just-in-time decoder that permits a two-dimensional implementation of the gate. Microscopic details of the system and proof of its threshold are deferred to Methods.

\section{Results}

\subsection*{The topological cluster state}

The topological cluster-state model~\cite{Raussendorf05} is described in three dimensions. However, we need only maintain a two-dimensional array of its qubits at a given moment to realise the system~\cite{Raussendorf07}. Specifically, we destructively measure each qubit immediately after it has interacted with all of the other qubits that are specified by the cluster state. This method of generating the model on the fly gives rise to a time-like direction, see Fig.~\ref{Fig:Raussendorf}(a).

We use the topological cluster state to realise the three-dimensional surface code~\cite{Hamma05}. We define the surface code on a lattice with arbitrary geometry with one qubit on each edge, $e$. The model is specified by two types of stabilizers, star and plaquette operators, denoted $A_v$ and $B_f$, see Methods~\ref{SubSec:Lattices} for details. Stabilizers specify the code states of the model, $|\psi \rangle$, such that $A_v |\psi \rangle = B_f |\psi \rangle = |\psi \rangle$ for all code states. Star operators are associated to the vertices, $v$, of the lattice such that $A_v = \prod_{\partial e \ni v} X_e$ where $\partial e$ is the set of vertices at the boundary of $e$ and $X_e $ and $Z_e$ are Pauli operators acting on $e$. Plaquettes $B_f$ lie on lattice faces $f$ such that $B_f = \prod_{e \in \partial f} Z_f$ where $\partial f$ are the edges that bound $f$. We give details on explicit lattice geometries that we might use in Methods~\ref{SubSec:Lattices}.

\begin{figure}
\includegraphics{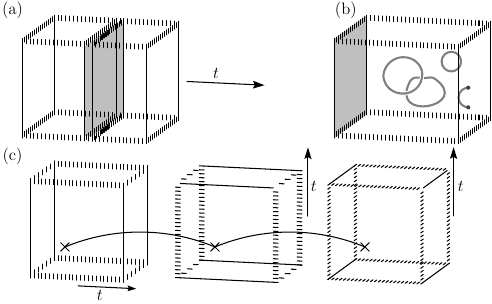}
\caption{(a)~The topological cluster-state model is a three-dimensional model that propagates quantum information over time with only a two-dimensional array of live qubits at any given moment. We show a grey plane of live qubits that propagates in the direction of the time arrow. (b)~Grey loops of show the connectivity of plaquette measurements that returned the -1 outcome. An arbitrary state is initialised fault tolerantly by initialising the system with an encoded two-dimensional fixed-gauge surface code on the grey face at the left of the image. (c)~The boundary configurations of the three copies of the surface code required to perform a local transversal controlled-controlled-phase gate. The first code has rough boundaries on the top and the bottom of the lattice. The middle(right) code has rough boundaries on the left and right(front and back) sides of the lattice. The orientation of the boundaries determines the time direction in which we can move the planes of live qubits. \label{Fig:Raussendorf}}
\end{figure}

To connect the three-dimensional surface code with the topological cluster state~\cite{Raussendorf05} we consider initialising the surface code in the $+1$ eigenvalue eigenstate of the logical Pauli-X operator by measurement. We consider initialising all of the physical qubits in the $|+\rangle_e$ state and then measure all of the plaquette operators. Up to an error correction step, this completes initialisation. To measure a plaquette operator $B_f$ we prepare an ancilla qubit, $a$, in the $| + \rangle _a$ state and couple it to the qubits that bound $f$ with controlled-phase gates, i.e., we apply $U = \prod_{e \in \partial f} CZ_{e, a}$ with $CZ_{j,k} = (1 + Z_j + Z_k - Z_jZ_k) / 2$ the controlled-phase gate. It may be helpful to imagine placing the ancilla qubit on face $f$. Measuring the ancilla qubit in the Pauli-X basis will recover the value of the face operator. However, we observe that before the ancillas are measured we have the topological cluster-state model~\cite{Raussendorf05} where now the qubits of the surface code give the qubits of the primal lattice of the model and the ancilla qubits make up the qubits of its dual lattice.


The require that the surface code lies in the $+1$ eigenvalue eigenstate of its face operators. However, measuring all the dual qubits of the cluster state projects its primal qubits into a random gauge of the three-dimensional surface code where, up to certain constraints, all the face measurements take random values. Henceforth, unless there is ambiguity, we refer to the model with face operators fixed onto their $+1$ eigenvalue eigenstate as the surface code. Otherwise we call it the random-gauge surface code. It is important to realise the fixed gauge surface code to perform the controlled-controlled-phase gate~\cite{Vasmer18}.

We use error correction to recover the surface code from the random gauge model~\cite{Paetznick13, Anderson14, Bombin15, Vuillot18}. We note that the product of all the face operators that bound a cell return identity, i.e. $\prod_{f \in \partial c} B_f = 1$ where $\partial c$ at the set of faces that bound cell $c$. As such, supposing all of the measurements are made noiselessly, there must be an even parity of measurements that return the $-1$ outcome about each cell. This in turn constrains the plaquette operator measurements to respect loop-like configurations on the dual lattice, see Fig.~\ref{Fig:Raussendorf}(b). To recover the fixed-gauge surface code we apply a Pauli-X operator with a membrane like support whose boundary terminates at each component of the loop configuration. 

Further, we can initialise the surface code in an arbitrary state fault tolerantly if, before face measurements are made, we replace the unentangled qubits on one side of one boundary of the lattice with an encoded surface code, for instance the grey face shown to the left of Fig.~\ref{Fig:Raussendorf}(a)~\cite{Brown18}. We refer to this face as the initial face. Imposing that the face operators of the surface code are fixed in the $+1$ eigenvalue eigenstate mean no loop configurations will terminate at this boundary. This method of initialisation is a dimension jump~\cite{Bombin16}.

\subsection*{Embedding the non-Clifford gate in two dimensions}

We can now explain how we can embed the three-dimensional surface code that performs a non-Clifford gate in two dimensions. There are several constraints the system must satisfy if we realise a controlled-controlled-phase gate with a two-dimensional system. We first point out that the orientation of boundaries of the topological cluster state are important for the transmission of logical information~\cite{Brown18}. Moreover, they constrain the temporal directions of the model. We consider again the cluster state in terms of the three-dimensional surface code. The surface code model has two types of boundary; rough and smooth~\cite{Hamma05}. If we couple ancilla to the surface code to recover the topological cluster state as specified above then the rough(smooth) boundaries of the surface code give rise to the primal(dual) boundaries of the cluster state~\cite{Raussendorf06}. If we only maintain a two-dimensional array of qubits, the plane must contain two distinct primal boundaries that are well separated by two distinct dual boundaries to support the encoded information. The grey plane in Fig.~\ref{Fig:Raussendorf}(a) is suitable, for example.

Secondly, the boundaries of the three surface codes must be correctly configured to perform a transversal controlled-controlled-phase gate~\cite{Kubica15, Vasmer18}. Fig.~\ref{Fig:Raussendorf}(c) shows the boundaries configured such that the qubit at coordinate $P = (x,y,z)$ of each code interacts with the respective qubit at the same location of the other codes via transversal controlled-controlled-phase gates. To perform the gate locally, these three lattices must overlap while maintaining these boundary conditions.

Finally, if we only maintain a two-dimensional array of the three-dimensional system, it is important that all of the qubits that need to interact with one another must be live at the same time. We show a system that satisfies all of these constraints in Fig.~\ref{Fig:Locality}. The figure shows a three-dimensional spacetime diagram of two overlapping codes moving orthogonally to one another. We omit the third code as it can travel in parallel to one of the codes already shown. The first code that has rough boundaries on the top and bottom of its volume and the live plane of qubits moves right across the page. The second code has rough edges on its left and right faces and moves upwards in the Figure. The controlled-controlled-phase gate is made at the cubic region where the codes intersect. We find that all of the appropriate qubits are active at the right moment by choosing two diagonal planes of live qubits for each code. We can also see that the planes we choose all have two well separated rough and smooth boundaries within their respective volume.

\begin{figure}
\includegraphics{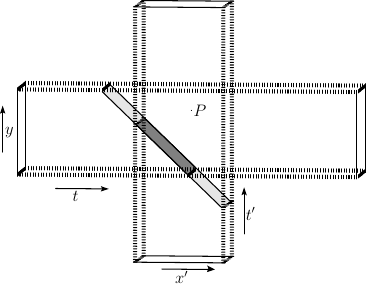}
\caption{Two codes travelling in different temporal directions cross. The third code is omitted as it can run in parallel with one of the two shown. Live qubits of the spacetime history are shown on light grey planes. The transversal gate is applied in the cubic region in the middle. It will be applied on the qubits shown at the dark grey plane where the two-dimensional arrays of qubits are overlapping. \label{Fig:Locality}}
\end{figure}

We are now ready to consider an embedding the three-dimensional spacetime shown in Fig.~\ref{Fig:Locality} onto a two-dimensional manifold. We find that one of the codes has to move with respect to the other. This can be naturally incorporated in the procedure to generate layers of the topological cluster state, see Methods~\ref{SubSec:GaugeFix}. We consider a point $P$ in the spacetime diagram in the region where the controlled-controlled-phase gate is performed such that a qubit of each of the two models must interact. The coordinates of the locations of the two codes change differently with time. The first code that travels upwards in the spacetime diagram has coordinates $P = (x' , t')$, the other that moves from left to right has coordinates $P= (t, y)$ with time $t = t'$. We neglect the $z$-coordinate as this is static. We imagine projecting the three-dimensional system onto a two-dimensional plane such that $y = t' = 0$, we now observe that $t = x'$. We conclude that one code must move with respect to the other two static codes to ensure all of the qubits that must interact are local at the right points in time.

\subsection{Just-in-time gauge fixing}

We use a decoder to fix the topological cluster-state model onto the surface code using data from the ancilla qubit measurements. In the case that there are measurement errors we will necessarily introduce small Pauli errors onto the system that will translate into Clifford errors upon application of the transversal non-Clifford gate. Measurement errors in this model take the form of strings which are detected by defects that lie at their endpoints. A decoder must attempt to close these endpoints. We can pair the defects with conventional decoders for topological codes such as minimum-weight perfect matching~\cite{Dennis02} or clustering~\cite{Bravyi13b}. We then fix the gauge by applying a membrane-like Pauli-X operator whose boundary is the union of both the error and the correction.

Small discrepancies in the correction compared with the actual measurement error will lead to gauge-fixing errors. Then, applying the transversal operation to the system whose gauge has been fixed incorrectly will introduce controlled-phase gates between pairs of qubits of different surface codes that have been involved in the same controlled-controlled-phase interaction. Provided the errors that are introduced during gauge fixing are small and are supported on a correctable region, the errors the transversal gate will introduce are also correctable. When we infer the values of the star operators we project the Clifford errors that are diagonal in the computational basis onto Pauli-Z errors. After the projection these errors also manifest as strings on the three-dimensional surface code. Again, we detect the string-like errors by measuring defects at the endpoints of the strings. Once more, we can correct these errors with any suitable algorithm that pairs the defects. We can therefore prove a fault-tolerance threshold under the gauge fixing procedure by showing the errors we introduce during gauge fixing are small in comparison to the distance of the code.

\begin{figure}
\includegraphics{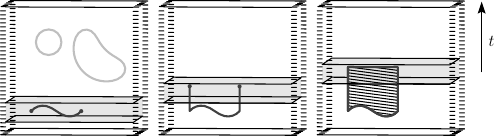}
\caption{\label{Fig:JIT} The spacetime diagram of an error on the dual qubits of the topological cluster-state where time travels upwards. The grey area shows the two-dimensional area of live qubits at a given moment. At the point where an error is discovered on the left diagram, it is unlikely that the defects should be paired due to their separation. We therefore defer matching the defects to a later time after more information emerges as decoding progresses, as in the middle figure. After enough time, the most likely outcome is that the defects we discovered in the left figure should be paired. The error we introduce fills the interior of the error, and the chosen correction.}
\end{figure}

We aim to fix the gauge of a three-dimensional model. However, we will only maintain a two-dimensional array of live qubits. As such the decoder has a limited amount of information available to make decisions about how to pair defects. To overcome this issue we defer correcting pairs of defects to a later time once we have more certainty that two defects should be matched. This leads the errors to spread over the time matching is deferred, see Fig.~\ref{Fig:JIT}. We propose a renormalisation-group~\cite{Harrington04, Bravyi13b} just-in-time decoder~\cite{Bombin18} that will defer the pairing of defects such that the spread of errors is controlled. Broadly speaking, we find that a just-in-time decoder will work if the pairing of two defects is deferred until both defects have existed for a time proportional to their separation in spacetime. We make this statement precise in Methods~\ref{App:JIT} and prove it controls the spread of errors. Moreover, we discuss how the decoder is modified to find a correction close to the boundary of the surface code.

Supposing an independent and identically distributed error model that is characterised in terms of chunks~\cite{Gacs86, Grimmett99, Gray01} we can show that the just-in-time decoder will not spread a connected component of the error by more than a constant factor of the size of the component. We further find that this spread error model can be decoded by a renormalisation-group decoder. We prove a threshold against the spread error model using a renormalisation-group decoder in Methods~\ref{SubSec:ThresholdTheorem} , see Lemma~\ref{Lemma:RG}. We then prove that the just-in-time decoder will give rise to spread errors with a constant spread, see Lemma~\ref{Lemma:JIT} in Methods~\ref{SubSec:JustInTime}, thus justifying the noise model. In contrast, the threshold theorem for just-in-time decoding given by Bomb\'{i}n.~\cite{Bombin18} uses a minimum-weight perfect matching decoder~\cite{Edmonds65, Dennis02}.

\section{Error mitigation}

One should worry that the just-in-time gauge fixing process will add errors that may significantly decrease the logical failure rate of the system. We argue that we can make this effect relatively benign in post processing. The errors introduced by the just-in-time decoder are twofold. Firstly, it may directly introduce a logical failure by incorrectly matching defects and, secondly, if the decoder does succeed, it will introduce large errors to the primal qubits of the system that need to be decoded globally once the gate is complete. 

Rather than considering the protocol as a gate that can be used on the fly in some algorithm, we use it to produce high-fidelity magic states by inputting Pauli-X eigenstates that are prepared fault tolerantly. Once gauge fixing is completed, we can simulate gauge-fixing again with a high-performance global decoder~\cite{Wang03, Raussendorf06, Wootton12}. We can then compare the output of the high-performance decoder with the just-in-time decoder. If their results do not agree, we discard the output. 

We denote the failure rates of the high-threshold(just-in-time) decoder $\overline{P}_{\text{HP}}$($\overline{P}_{\text{JIT}}$). Both decay rapidly with system size below threshold, but we suppose $\overline{P}_{\text{HP}} \ll \overline{P}_{\text{JIT}}$. In the event that the decoders disagree, we discard the state. This occurs with likelihood $ \sim \overline{P}_{\text{JIT}}$. In the case that the decoders agree, the state that we output is logically incorrect with likelihood $\sim \overline{P}_{\text{HP}}\overline{P}_{\text{JIT}}$. The use of a high-threshold decoder therefore improves the fidelity of the post-selected output states. The failure rate of the just-in-time decoder then only determines the rate at which magic states should be discarded. 

We can also use the output of the high-performance decoder to deal with the errors spread to the surface code with just-in-time decoding. We can compare the output of the high-performance decoder with the correction produced by the just-in-time decoder. The discrepancy in their outputs should indicate the approximate locations of the gauge-fixing errors. This information can be fed to the decoder we use to decode the errors on the primal qubits that will flag the discrepancy as qubits that are highly likely to support an error. Indeed there have been a number of results showing how to improve decoders by using knowledge of likely errors~\cite{Delfosse14a, Fowler13, Nickerson17, Criger18}. Ref.~\cite{Bombin18} treats these flagged qubits as erasure errors that support linking charges~\cite{Yoshida15, Bombin18a}. The proof given in Methods~\ref{App:JIT} shows that we have a threshold without these considerations, but implementations of this protocol should use a decoder that accounts for these effects to improve their performance. After post-selection then we might expect the system to perform as though it were gauge fixed globally with some known erasure errors~\cite{Barrett10}.

\section{Discussion}

To summarise, we have shown how to perform a fault-tolerant controlled-controlled-phase gate with a two-dimensional surface-code architecture and we have proved it has a threshold. Next, it is important to compare the resource scaling of this scheme compared with more conventional two-dimensional approaches to fault-tolerant quantum computation, namely, surface-code quantum computation with magic state distillation~\cite{Bravyi05, Fowler12a}. Given that gauge-fixing errors will spread phase errors as we apply the three-qubit transversal gate, the logical error rate of this scheme is likely to decay more slowly than approaches using magic state distillation where we do not rely on gauge fixing. However, the spacetime volume of realising a fault-tolerant controlled-controlled-phase gate, $\sim 30 d^3$, see Methods~\ref{SubSec:Protocol}, is an order of magnitude smaller than a single distillation routine, as such, these schemes are clearly deserved of further comparison. It is likely that the optimal choice will depend on the error rate of the physical hardware.

It will also be interesting to compare the protocol introduced here to that presented by Bomb\'{i}n~\cite{Bombin18}. This two-dimensional non-Clifford gate is based on the color code such that a transversal $T = \text{diag}(1, \text{i}^{1/2}) $ gate is performed over time via single-qubit rotations. This will make for a very interesting comparison since, even though decoding technology for the color-code model~\cite{Delfosse14, Bombin15a, Brown16a, Kubica17, Kubica18, Aloshious18, Brown19} remains lacking in comparison to the surface code~\cite{Raussendorf07, Fowler12a}, the fact that the non-Clifford operation is performed using single-qubit rotations instead of a weight-three gate will mean that fewer errors will be spread during computational processes. To begin comparing these protocols fairly it will first be important to improve the decoding algorithms we have for the color code.

Finally, it is likely that there will be several ways to optimise the present scheme. Although we find transversal gates via a mapping between the color code and the surface code~\cite{Kubica15, Vasmer18} such that we arrive at quite a specific lattice, it will be surprising if we cannot find ways of performing a constant-depth locality-preserving gate with other lattices~\cite{Nickerson18} for the topological cluster-state model. Indeed, history has shown that the gates a given model is able to achieve is connected with the macroscopic properties of a system, not its microscopic details. Developing our understanding of measurement-based quantum computation by decomposing it in terms of its topological degrees of freedom~\cite{Bombin18, Brown18} is likely to be a promising route towards better models of two-dimensional fault-tolerant quantum computation.

\begin{acknowledgements}
We are grateful to A. Doherty, M. Kesselring, N. Nickerson and S. Roberts for helpful and supportive discussions, and in particular S. Bartlett, C. Chubb, C. Dawson and S. Flammia for patiently listening to various incarnations of these results. We thank S. Bartlett and S. Flammia for critically reading earlier drafts of this manuscript, and we thank D. Poulin for many discussions on non-Clifford operations. We are also grateful to H. Bomb\'{i}n for correspondence on gauge prefixing with the color code. This work is supported by the University of Sydney Fellowship Programme and the Australian Research Council via the Centre of Excellence in Engineered Quantum Systems(EQUS) project number CE170100009. The author declares that he has no competing interests. All information needed to evaluate the conclusions of the paper are present in the paper. Additional information related to this paper may be requested from the author.
\end{acknowledgements}

\section*{Methods}

\subsection{Lattices and mobile qubits}
\label{App:Microscopics}

Here we describe the microscopic details and dynamics of the system. We describe the lattice, and how the gauge-fixing progresses. We finally discuss the protocol over its entire duration to estimate its resource cost.

\subsubsection{Lattices}
\label{SubSec:Lattices}

\begin{figure}[b]
\includegraphics{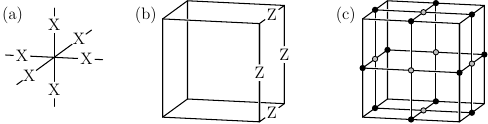}
\caption{The star, (a), and plaquette, (b), of the surface code where qubits lie on the edges of a cubic lattice. (c)~A unit cell of the topological cluster state. Black qubits are those of the three-dimensional surface code. Measuring the grey ancilla recover the values of the face operators of the surface code whose qubits lie on the edges of a cubic lattice. \label{Fig:Stabilizers}}
\end{figure}

In Ref.~\cite{Vasmer18} the authors describe three surface codes on different three-dimensional lattices. We give simple representations of the lattices here that help understand the steps of gauge fixing. The first of the three copies is well represented with the standard convention where qubits lie on the edges of a cubic lattice. We show star and plaquette operators in Fig.~\ref{Fig:Stabilizers}(a) and~(b). The other two lattices are represented with qubits on the vertices of rhombic dodecahedra in Ref.~\cite{Vasmer18}. We offer an alternative description of this model in this Subsection.

\begin{figure}
\includegraphics{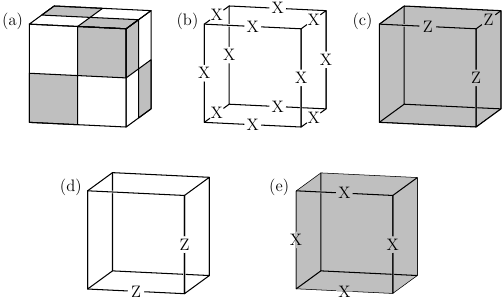}
\caption{\label{Fig:AltLattice} A representation of one of the three-dimensional surface codes used in the controlled-controlled-phase gate. (a)~A unit cell is composed of four primal cubes and four dual cubes configured as shown with primal and dual cubes shown in white and grey, respectively. (b)~A Pauli-X `star' operator supported on a primal cube. (c)~A `plaquette' operator supported on the corner of a dual cube. (d)~A smooth boundary stabilizer. (e)~A rough boundary stabilizer.}
\end{figure}

All of the qubits are unified with the qubits of the first surface code on the cubic lattice. We therefore find a straight forward way of representing the stabilizers of the second model with qubits on the edges of a cubic lattice. We show the stabilizers in Fig.~\ref{Fig:AltLattice} on a cubic lattice. To represent this model we bicolour the cubes, as they support different stabilizers depending on their colour, see Fig.~\ref{Fig:AltLattice}(a). The white primal cubes support Pauli-X `star' operators and the grey dual cubes support the Pauli-Z `plaquette' operators. We express their support with the following equations
\begin{equation}
A_c = \prod_{e\in \partial c} X_e, \quad B_{c,v} = \prod_{\substack{  \partial e \ni v  \\ e \in \partial c}} Z_e, \label{Eqn:AltStabilizers}
\end{equation}
where $\partial c$ are the set of edges on the boundary of cube $c$ and again, $\partial e$ is the set of vertices $v$ at the boundaries of edge $e$, i.e., its endpoints. The operators $A_c$ and $B_{c,v}$ are, respectively, defined on primal and dual cubes only. We also note that each vertex touches four dual cubes, as such there are four $B_{c,v}$ at each vertex. Further, there are eight vertices on a cube, there are therefore eight $B_{c,v}$ stabilizers for each dual cube $c$. There is only one $A_c$ operator for each primal cube. We also show the stabilizers added at the smooth- and rough-boundaries in Figs.~\ref{Fig:AltLattice}(d) and~(e) respectively. See Ref.~\cite{Vasmer18} for a more detailed discussion on the boundaries.

For convenience, we show the same lattice in the standard convention where star and plaquette operators are associated to the vertices and faces, respectively; see Fig.~\ref{Fig:AltStarPlaquettes}. In Fig.~\ref{Fig:AltStarPlaquettes}(a) we show two cells; one with four triangular faces and one with eight. The product of the faces about either of these cells are constrained to give identity. These constraints are important for the gauge-fixing procedure.

It is helpful to connect the face operators of the different cells shown in Fig.~\ref{Fig:AltStarPlaquettes}(a) to their respective plaquette operators $B_{c,v}$ as represented in Fig.~\ref{Fig:AltLattice} and by Eqn.~(\ref{Eqn:AltStabilizers}). Indeed, the four plaquette operators that enclose the four-sided cell to the left of the image are the four stabilizers $B_{c,v}$ about a common vertex, $v$, as represented in Fig.~\ref{Fig:AltLattice}. Likewise, the eight-sided cell at the right of the image supports all the stabilizers $B_{c,v}$ about a common cube, $c$. 

\begin{figure}
\includegraphics{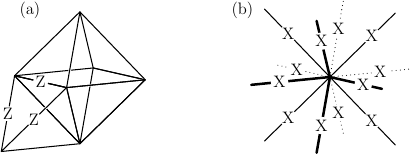}
\caption{(a)~A plaquette and~(b) a star operator of the alternative lattice with the qubits on the edges of a lattice and stars(plaquettes) are associated to the vertices (faces) of the lattice to keep with the standard convention for describing the surface code. In~(b) edges in the foreground are bold, and edges in the background are dotted. \label{Fig:AltStarPlaquettes}}
\end{figure}

As discussed in the main text, the cells of the surface code are connected to cells of the topological cluster-state model. In Fig.~\ref{Fig:Stabilizers}(c) we show a cell of the topological cluster-state model that is obtained by coupling ancilla qubits $f$ prepared in the state $|+\rangle_f$, shown in grey in the figure, to the faces $\partial f$ with a controlled-phase gate. We  measure the $B_f$ operator if the ancilla is measured in the Pauli-X basis. The ancilla qubits in this picture are the qubits of the dual lattice of the topological cluster state. The controlled-phase gates used to couple the ancilla to the black qubits of the surface code are shown by black edges.

Finally, we count the number of qubits in a single unit cell, as these will make up a site in the threshold theorem given in Methods~\ref{App:JIT}. There are three qubits per cube, so over a unit cell of eight cubes we have 24 qubits. We also include ancilla qubits to measure the plaquette operators of each model. In the cubic lattice model we make one plaquette measurement for each face of the lattice. There are three faces per cube of the lattice, we therefore have 24 ancilla qubits to measure the faces of the cubic lattice model. For the alternative lattice we make eight measurements per dual cube of the unit cell. We have four dual cubes per unit cell, we therefore arrive at 32 ancilla qubits for each unit cell of the alternative lattice shown in Fig.~\ref{Fig:AltLattice}. We therefore have 48 qubits in total per unit cell of the cubic lattice model, and 56 qubits per unit cell of the alternative model.

We now consider a unit cell of the total system with three overlapping lattices. Each unit cell includes one copy of the cubic lattice model and two copies of the alternative model. We therefore find that we have 160 qubits per unit cell in total. The unit cells at the boundary of the system can be regarded as bulk cells with some of the qubits removed. As such, when we accound for the boundary, we can take this value as an upper bound. Lastly, we note that each of these unit cells contributes two units of distance to the system.

\subsubsection{Gauge-fixing}
\label{SubSec:GaugeFix}

Having specified the lattices, we now discuss how to perform the gauge-fixing process. Gauge fixing moves three two-dimensional surface codes through a three-dimensional spacetime volume to reproduce three overlapping three-dimensional surface codes over time. This motion proceeds by repeatedly producing a thin layer of three-dimensional surface code and then measuring some of its qubits in a product basis to collapse the system onto a two-dimensional surface code that has been displaced through spacetime. Gauge fixing, and transversal controlled-controlled-phase gates are applied at the intermediate step where the system is in the state of a thin slice of three-dimensional surface code. We show one period of the process for two lattices in Fig.~\ref{Fig:LatticeSurgeryPeriod}. Each panel of the figure shows the region in which the transversal controlled-controlled-phase gate is conducted within the black cube. The top figures show the progression of a lattice moving from left to right through the region over time, and the lower figures show a lattice moving upwards through the region. Time progresses from left to right through the panels. The columns of the diagram are synchronised.

We now describe the microscopic details of a single period of the gauge fixing process. We perform similar processes on all three surface codes involved in the gate in unison. The three surface codes only differ in the direction they move through the spacetime volume, and the lattice we use to realise the surface code. As such we will only focus on a single surface code, say that shown in Fig.~\ref{Fig:LatticeSurgeryPeriod}(a). 

A period of the gauge-fixing process begins with a two-dimensional surface code supported on the qubits shown at time $t$ to the left of Fig.~\ref{Fig:LatticeSurgeryPeriod}(a) and it ends at time $t+1$ with a displaced surface code, shown in the right column of the figure. It is helpful to label the subsets of qubits of the spacetime volume that support a surface code at time $t$($t+1$) with the label $\mathcal{Q}_t$($\mathcal{Q}_{t+1}$). The thin three dimensional-surface code that we produce at the intermediate step is shown in the central column of Fig.~\ref{Fig:LatticeSurgeryPeriod}(a) at time $t+1/2$. We denote the qubits that support the three-dimensional surface code at this time by $\mathcal{Q}_{t+1/2}$. The subsets of qubits we have defined are such that $\mathcal{Q}_{t},\, \mathcal{Q}_{t+1} \subset \mathcal{Q}_{t+1/2}$ and the intersection of $\mathcal{Q}_t$ and $\mathcal{Q}_{t+1}$ is non empty.

We map the surface code at time $t$ onto the three-dimensional surface code shown at time $t+1/2$ by measurement. We initialise the qubits in the subset $\mathcal{Q}_{t+1/2} \backslash \mathcal{Q}_t $ in the $|+\rangle$ state. We then measure all of the plaquettes supported on $\mathcal{Q}_{t+1/2}$ that have not been measured previously. Plaquettes supported entirely on $\mathcal{Q}_{t}$ have already been measured at an earlier period. It is therefore unnecessary to measure these stabilizers again.

The plaquette measurements will return random outcomes, and may include errors. We must fix the gauge of the plaquettes of the active layer of the surface code to their $+1$ eigenstate. This is described in more detail in Methods~\ref{App:JIT}. For now we assume that it is possible to accomplish this. Once we make the gauge-fixing correction, we apply the controlled-controlled-phase gate between the qubits of subset $\mathcal{Q}_{t+1/2} \backslash \mathcal{Q}_{t+1}$ of each of the three systems involved in the gate.

We finally recover a two-dimensional surface code on the subset of qubits $\mathcal{Q}_{t+1}$ by measuring the qubits of the subset  $\mathcal{Q}_{t+1/2} \backslash \mathcal{Q}_{t+1}$ in the Pauli-X basis. We use the outcomes of the destructive single-qubit Pauli-X measurements to infer the values of the star operators of the three-dimensional surface code. As measurement errors that occur when we make single-qubit measurements are indistinguishable from physical errors, the readout of the star operators of the three-dimensional surface code is fault tolerant. 

 In a sense, we can consider this as a dimension jump~\cite{Bombin16} where a two-dimensional model is incorporated into a three-dimensional model to leverage some property of the higher-dimensional system. In this case, we prepare a very thin slice of the three-dimensional surface code model where, once all the physical operations have been performed, we can collapse the three-dimensional model back onto a two-dimensional model again. As a remark, we point out that the latter dimensional jump where we go from the three-dimensional surface code to its two-dimensional counterpart has been accomplished by Raussendorf, Bravyi and Harrington~\cite{Raussendorf05} where they fault-tolerantly prepare a Bell pair between two surface codes using the topological cluster state.

\begin{figure}
\includegraphics{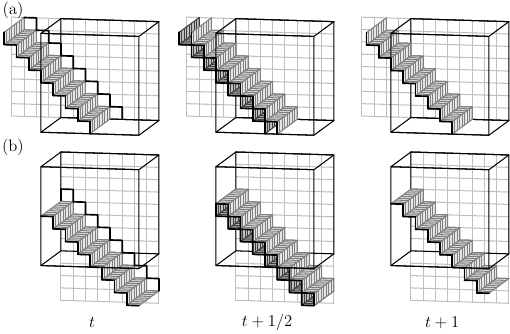}
\caption{One period of the gauge-fixing process for the models undergoing the controlled-controlled-phase gate. Time progresses between the figures from the left to the right from time $t$ to $t+1$ via an intermediate step at time $t+1/2$. (a)~The lattice above shows the code moving from left to right through the controlled-controlled-phase gate region, marked by the black cube, and (b)~the lower figures show the a code moving upwards through the black cubic region. Importantly, all of the overlapping qubits are live at the same time. The figures at the left show a two-dimensional surface code. In the middle figures, ancillas are added and the plaquette measurements of each of the cubes are made. The a gauge-fixing correction is made before the transversal controlled-controlled-phase gate is applied to each of the qubits on the lower levels that are subsequently measured destructively. \label{Fig:LatticeSurgeryPeriod}}
\end{figure}

Given that we have shown that all of the plaquettes can be measured about cubes of the lattice for both of the surface code geometries, it is enough just to consider the creation and collapse of cubes of the model, where the creation of a cube involves measuring all of the plaquettes within a cube. This is shown in Fig.~\ref{Fig:LatticeSurgeryPeriod}. At the left of the figure, the live qubits are two-dimensional surface codes in known eigenstates of their star and plaquette stabilizers. We initialise new ancilla close to the surface code and measure the plaquettes about each cube to produce the thin layer of the three-dimensional surface code. This is shown in the middle of the figure. We will collapse some of the qubits to produce the translated surface code shown to the right of the figure. Before this, we use the measurement data from the plaquettes to gauge fix the qubits using the just-in-time decoder and perform a controlled-controlled-phase gate on the qubits that are about to be measured. The outcomes of the single-qubit Pauli-X measurements are collected to infer the values of the star operators of the surface code model with a global decoder after the process is completed.

It is worth remarking that the method we have discussed here enables us to produce other three-dimensional structures that go beyond foliation~\cite{Nickerson18}. Much research has sought to map quantum error-correcting codes into measurement-based schemes~\cite{Bolt16, Brown18} through a system called `foliation' to access favourable properties of exotic quantum error-correcting codes. Conversely, some fault-tolerant measurement-based schemes have been developed that are not expected to have a description in terms of a quantum error-correcting code. Really though, at least in theory, we should expect that we can implement any fault-tolerant protocol independent of the architecture that we choose to realise our qubits. The scheme presented here gives us a way to realise these models that are beyond foliation with a two-dimensional array of static qubits. Given their promising thresholds~\cite{Nickerson18} it may be worth exploring the practicality of some of these higher-dimensional models on two-dimensional architectures.

In the same vein, we point out that the two-dimensional surface code that is propagated by the code deformations of the alternative lattice is the described naturally on the hexagonal lattice. This lattice has been largely dismissed because of the weight-six hexagonal stabilizer terms diminish the threshold against bit-flip noise~\cite{Delfosse14a}. However, we measure its stabilizers using only weight-three measurements, and the higher weight stabilizers are inferred from single-qubit measurements. As such, it may be worth revisiting this model as the scheme presented here offers a method of stabilizer extraction that does not require measurements of weight greater than three. We may therefore expect this model to have a high threshold with respect to the gate error model~\cite{Raussendorf06}.

\subsection{Implementing the non-Clifford gate}
\label{SubSec:Protocol}

We finally describe the entire protocol which is summarised in Fig.~\ref{Fig:Protocol}, and discuss its spacetime resource cost as a function of the code distance of the system, $d$. Each panel of the figure shows three arrays, each of which supports a code. It may be possible to embed the qubits of all three codes on one common array, but for visualisation purposes we imagine three stacked arrays that can perform local controlled-controlled-phase gates between nearby qubits on separate arrays. Parity measurements are performed locally on each array.

The code on the lower array will move from left to right along the page as we undergo code deformations. For a strictly local system we consider an extended array that we refer to as the long array. However, as we discuss towards the end of this section, we can reduce the size of this array by simulating a system with periodic boundary conditions. We proceed with the discussion where the process is strictly local. To evaluate the resource cost we refer to a single unit of time as a cycle. The resource cost is measured in units of qubit cycles.

\begin{figure}
\includegraphics{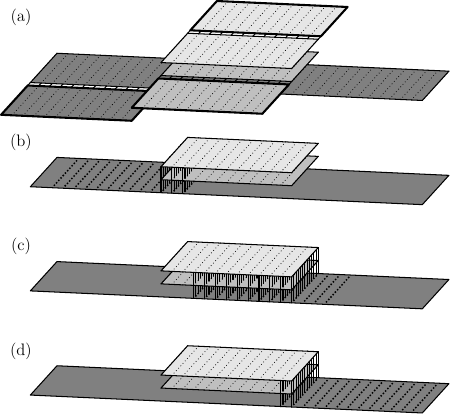}
\caption{{\bf A two-dimensional layout for the non-Clifford gate.} The progression of the controlled-controlled-phase gate. (a)~Qubits are copied onto the stacked chips from other surface codes using lattice surgery. (b)~The thick black qubits are passed under the other two chips and controlled-controlled phase gates are applied transversally between the three chips where the qubits overlap. (c)~and (d)~show later stages in the dynamics of the gate. \label{Fig:Protocol}}
\end{figure}

Before the gate begins we must copy the encoded information onto the arrays where the gate is performed. We might accomplish this with lattice surgery~\cite{Horsman12, Brown17}. Fig.~\ref{Fig:Protocol}(a) shows three surface codes that have been moved close to the edges of the arrays where the gate will be performed. One logical qubit is copied to the far left of the long array. Initialising the system will take time that scales like the code distance, $\sim d $ cycles.

We might also consider using the system offline to prepare high-fidelity magic states. With this setup we apply the gate to three surface codes initialised fault tolerantly in an eigenstate of the Pauli-X operator. While this will mean that we do not need to copy information onto the three arrays, it will still be necessary to fix the gauge of the system such that all the plaquette operators of the initial face are in their $+1$ eigenvalue eigenstate. To the best of our knowledge this will still take $\mathcal{O}(d)$ time to prepare the system such that its global charge is vacuum.

We remark that using the protocol offline to produce magic states may offer some advantages. For instance, as we discussed in the main text, we can post-select high-quality output states by comparing the result of the just-in-time decoder with a high-performance decoding algorithm. Moreover, the required connectivity of the gate with the rest of the system will be reduced. This is because we need only copy the magic states out of the system and we do not need to input arbitrary states into the system that may require additional routing.

Once the system is initialised we begin performing the code deformations as discussed in the previous section. The code deformations move the code on the long array underneath the other two codes, see Fig.~\ref{Fig:Protocol}(b), and out the other side, see Fig.~\ref{Fig:Protocol}(c). Assuming that one step as shown in Fig.~\ref{Fig:LatticeSurgeryPeriod} takes one cycle, moving the lower code all the way under the other two and out the other side will take $2d$ units of time. The final state of the protocol is shown in Fig.~\ref{Fig:Protocol}(d).

The above discussion explains that the three arrays will be occupied for $3d$ cycles. Each array will support a code that will consist of $\sim d \times d$ unit cubes that collectively can produce a thin slice of the three-dimensional surface code. Arrays of unit cubes are shown in Fig.~\ref{Fig:LatticeSurgeryPeriod} at time $t + 1/2$. The long array must be able to support unit cubes in $3d \times d$ locations. We include the idle qubits of the long array in the resource cost over the entire protocol. We count the qubits of each unit cube we need to realise each of the three-dimensional surface codes, including an ancilla qubit for each plaquette measurement we make on a given unit cube. We note that we have chosen the term `unit cube' here, as distinct from the `unit cell' that was defined in Methods~\ref{SubSec:Lattices}. The unit cell is a single element of a translationally invariant lattice that we use in Methods~\ref{App:JIT}. A unit cube as defined here contributes one unit of distance to the system in both the spatial and temporal directions.

We consider two different lattices that have been discussed in Methods~\ref{SubSec:Lattices}; the standard surface code, and the surface code on the alternative lattice that we show in Fig.~\ref{Fig:AltLattice}. Both lattices include qubits lying on the edges of a standard cubic lattice. There are twelve edges on the boundary of each unit cube but, as we see in Fig.~\ref{Fig:LatticeSurgeryPeriod}, the unit cubes are such that there are $\sim d\times d$ edges that are shared between two cubes, as well as $\sim d \times d $ faces, each consisting of four edges, that are shared between pairs of cubes. We therefore find seven qubits per unit cube lying on the edges of the cubic lattice.

We also assume that there is a single qubit for each plaquette measurement needed to produce the lattices shown in Fig.~\ref{Fig:LatticeSurgeryPeriod} at time $t + 1/2$. For the standard lattice surface code there are six plaquette measurements associate to each unit cube; one for each of its faces. However, as shown in Fig.~\ref{Fig:LatticeSurgeryPeriod} at time $t$, two of the faces have already been measured during an earlier cycle. Further, two face measurements of each unit cube are shared with other unit cubes, we therefore count three measurement ancilla qubits per unit cube for the standard surface code. In total, including the qubits on the edges of the lattice, we find 10 qubits per unit cell of the standard lattice surface code. A similar analysis finds that we need to perform four plaquette measurements per unit cube to produce a slice of the alternative surface code at time $t+1/2$. The alternative surface code thus includes 11 qubits per unit cell.

To conserve resources we assume that the two stationary qubit arrays support the two alternative lattice surface codes. Each of these arrays therefore requires $11 d^2 $ qubits to produce $d\times d $ unit cells. Similarly, the resource cost of $3d^2$ unit cells of the conventional cubic lattice surface code on the long array uses $10\cdot 3 d^2$ qubits. In total, all three arrays support $ \sim [30 + 2 \cdot 11] d^2 = 52 d^2  $ qubits. Assuming that the full protocol is completed in $3d$ cycles we arrive at a total resource cost of $156d^3$ physical qubit cycles for a single implementation of the gate.

The conservative estimate given above assumes that $10 \cdot 2d \times d$ qubits are idle for $3d$ units of time. We would obtain a resource saving of $60d^3$ qubit cycles by making use of these idle qubits or altering the protocol such that they are not needed. An easy way to achieve this is by simulating periodic boundary conditions on the long array. Indeed, we can achieve the same protocol by replacing the long array with a $d\times d$ array with cylindrical boundary conditions such that all three arrays have a size $\sim d\times d$ unit cells.

Periodic boundary conditions are easily achieved given a distributed architecture~\cite{Barrett05} where we are not constrained to strictly local interactions. One could also imagine approximating periodic boundary conditions with a strictly local system using a line of $L$ gates that share one very long array. The very long array has size $(L+2) d \times d$ and supports $L$ disjoint $d\times d $ surface codes. All $L$ gates proceed in parallel where all $L$ codes move synchronously along the very long array. In both cases, in the latter where $L$ diverges, we arrive at a resource cost of $\sim 96 d^3$ qubit cycles per controlled-controlled-phase operation. Over the course of the gate we must perform $\sim 3 d^3$ controlled-controlled phase gates.

At this stage, one might be willing to make speculations on how the resource cost of the gate proposed here compares with well-studied magic-state distillation protocols. Let us take a recent example~\cite{Gidney19} where a magic-state distillation protocol is proposed that occupies $12 d' \times 6 d'$ qubits over $5.5 d'$ cycles, giving a total resource cost that approaching $ \sim 400{d'}^3$ qubit cycles. We deliberately choose to quantify the qubit cycles of this example with units of ${d'}^3$ instead of $d^3$. This is because, without numerical simulations, we cannot accurately calculate how the failure rate of the gate presented in this work decays with $d$ as compared with $d'$.

Optimistically, we might hope that the logical failure rates of both protocols decay comparably in distance. In which case we might compare resources whereby $d \sim d'$ and we find that the gate presented here can outperform magic-state distillation using a small fraction of the resources. In practice, gauge fixing will introduce additional errors while the controlled-controlled-phase gate proceeds. In contrast, a magic-state distillation procotol that uses only logical Clifford operations will not experience gauge fixing errors. As such we should expect that $d  > d' $ to obtain comparable logical failure rates. Presently, little work has been done to calculate the logical failure rate of gates that make use of gauge fixing. The extent of this problem will be very sensitive to the error rate of the plaquette measurements. In principle, errors introduced by gauge fixing are of a different nature to errors introduced by the environment. As we have discussed in the main text, an appropriately chosen decoder might be able to mitigate the errors introduced by gauge fixing.

Another reason one should anticipate that we should choose $d > d'$ is because the application of noisy controlled-controlled-phase gates on the physical qubits will introduce additional errors to the system. Of course, the noise introduced by these entangling gates depends on the implementation of these gates. For the discussion here, it is simpler to remain agnostic about the physical implementation of the logical gate. Further work needs to be done to determine the magnitude of these sources of noise.

\subsection{A just-in-time decoder}
\label{App:JIT}

Here we prove that the non-Clifford operation will perform arbitrarily well as we scale the size of the system provided the physical error rate on the qubits is suitably low. We outline an error-correction procedure as we undergo the controlled-controlled-phase operation. The argument requires two main components. We require a just-in-time decoder that controls the spread of an error during the gauge fixing, and we show that the spread errors are sufficiently small that we can correct them at a later stage. We first show that we can decode a spread error model globally during post processing using a renormalisation-group decoder before arguing that the error model is justified by the just-in-time decoder.

\subsubsection{Notation and terminology}

We suppose a local error model acting on the qubits of the spacetime of the non-Clifford process. For suitably low error rate we can characterise the errors as occurring in small, local, well-separated regions~\cite{Grimmett99, Gacs86, Gray01}. The just-in-time gauge fixing decoder will spread this error. Given the spread is controlled, we can show that a global renormalisation-group decoder~\cite{Harrington04, Bravyi13b} will correct the errors that remain after the gauge-fixing process. Our argument can be regarded as an extension of the threshold theorem presented in Ref.~\cite{Bravyi13b}. As such, we will adopt several definitions and results presented in~\cite{Bravyi13b}. We will also keep our notation consistent with this work where possible.

We divide the system into sites; small local groups of qubits specified on a cubic lattice. We consider an independent and identically distributed error model where a Pauli error occurs on a site with probability $p_0$. We say that a site has experienced an error if one or more of the qubits has experienced an error. Given a constant number of qubits per site, $N$, then the likelihood a site experiences an error $p_0 = 1 - (1-\varepsilon)^N$ is constant where each qubit of the system experiences an error with constant probability $\varepsilon$. We consider a Pauli error $E$ drawn from the probability distribution described by the noise model. We will frequently abuse notation by using $ E $ to denote both a Pauli operator, and the set of sites that support $E$.

The syndrome of an error $E$ is denoted $ \sigma(E) $. It denotes the set of defects caused by $E$. We say that a subset of defects of a syndrome can be neutralised if a Pauli operator can be applied such that all of the defects are neutralised without adding any new defects. We may also say that any such subset of the syndrome is neutral.

Defects lie at locations, or sites, $u = (u_x, u_y,u_t)$ in $2+1$-dimensional spacetime. The separation between two sites is measured using the $\ell_\infty$ metric where the distance between sites $u $ and $v $, denoted $|u-v|$ is such that $|u-v| = \min(|u_x-v_x|,\, |u_y - v_y|,\,| u_t - v_t|)$. We will be interested in regions of spacetime that contain a collection of points $M$. The diameter of $M$ is equal to $\max_{u,v \in M} |u-v|$. We say that a subset of points $M$ is $r$-connected if and only if $M$ cannot be separated into two disjoint proper subsets separated by a distance more than $r$.The $\Delta$-neighbourhood is the subset of sites that lie up to a distance $\Delta$ from a region $\rho$ together with the sites enclosed within region $\rho$ itself. Given that we have a local model in spacetime, defects appear on sites within the one-neighbourhood of the sites of the error $E$. The following argument relies heavily on the notion of a chunk at a given length scale $Q$.

\begin{defn}[Chunk]
Let $E$ be a fixed error. A level-$0$ chunk is an error at a single site $u\in E$. A non-empty subset of $E$ is called a level-$n$ chunk if it is the disjoint union of two level-$(n-1)$ chunks with diameter $\le Q^n / 2$.
\end{defn}

We express errors in terms of their chunk decomposition. We define $E_n$ as the subset of sites that are members of a level-$n$ chunk such that
\begin{equation}
E = E_0 \supseteq E_1 \supseteq  \dots \supseteq E_m, \label{Eqn:ErrorDecomp}
\end{equation} 
where $m$ is the smallest integer such that $E_{m+1} =\emptyset$. We then define subsets $F_j = E_j \backslash E_{j+1}$ such that we can obtain the chunk decomposition of $E$, namely
\begin{equation}
E = F_0 \cup F_1 \cup \dots \cup F_m. \label{Eqn:Chunk}
\end{equation}
A level-$m$ error is defined by the smallest value of $m$ such that $E_{m+1} = \emptyset$.

Expressing an error in terms of its chunk decomposition enables Bravyi and Haah~\cite{Bravyi13b} to prove that a renormalisation group decoder will decode any level-$m$ error with a sufficiently large system. The proof relies on the following lemma
\begin{lem} \label{Lemma:Diameter}
Let $Q \ge 6$ and $M$ be any $Q^n$-connected component of $F_n$. Then $M$ has a diameter at most $Q^n$ and is separated from other errors $E_n \backslash M$ by a distance greater than $ Q^{n+1} / 3$.
\end{lem}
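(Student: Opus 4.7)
The plan is to prove both claims simultaneously by induction on $n$. The base case $n=0$ is immediate from the definitions: $F_0 = E_0 \setminus E_1$ consists of $E$-sites with no other $E$-site within distance $Q/2$, so every $Q^0 = 1$-connected component of $F_0$ is a singleton of diameter $0 \le Q^0$, and its distance to any other site of $E_0 \setminus M$ exceeds $Q/2 \ge Q/3 = Q^1/3$ when $Q \ge 6$.

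For the inductive step, I would first establish the key structural observation: if $C, C'$ are two disjoint level-$n$ chunks each containing an $F_n$-site, then $\mathrm{diam}(C \cup C') > Q^{n+1}/2$. Otherwise $C \cup C'$ would itself be a level-$(n+1)$ chunk by definition, forcing its sites — including the chosen $F_n$-sites — into $E_{n+1}$, a contradiction. Combined with $\mathrm{diam}(C), \mathrm{diam}(C') \le Q^n/2$, this yields $d(C, C') > Q^{n+1}/2 - Q^n = (Q/2 - 1)Q^n \ge 2Q^n$ for $Q \ge 6$. Hence any two $F_n$-sites within distance $Q^n$ of each other must lie in level-$n$ chunks that either coincide or overlap. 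I would then walk along a minimal $Q^n$-chain in $M$ connecting any two extremal points and, using the inductive hypothesis applied to the level-$(n-1)$ halves of the chunks encountered, trace through the overlap structure to conclude $\mathrm{diam}(M) \le Q^n$.

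For the separation bound, suppose for contradiction some $w \in E_n \setminus M$ satisfies $d(w, u) \le Q^{n+1}/3$ for some $u \in M$. Whether $w \in F_n$ or $w \in E_{n+1}$, a level-$n$ chunk $C_w \ni w$ can be extracted, in the latter case by taking one of the two halves of a level-$(n+1)$ chunk containing $w$. Choosing a level-$n$ chunk $C_u \ni u$, the triangle inequality gives
\[
\mathrm{diam}(C_u \cup C_w) \le \tfrac{Q^n}{2} + \tfrac{Q^{n+1}}{3} + \tfrac{Q^n}{2} = Q^n + \tfrac{Q^{n+1}}{3} \le \tfrac{Q^{n+1}}{2},
\]
where the final inequality holds precisely when $Q \ge 6$. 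If $C_u$ and $C_w$ are disjoint, their union is then a level-$(n+1)$ chunk, forcing $u \in E_{n+1}$ and contradicting $u \in F_n$.

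The main obstacle in both parts is the overlapping case: when two level-$n$ chunks share sites they do not form a disjoint union and the clean contradiction above does not close. I plan to handle this by choosing chunk representatives canonically — for example by a greedy maximal procedure — and by invoking the inductive hypothesis on the level-$(n-1)$ halves of any pair of overlapping level-$n$ chunks. A shared site produces two level-$(n-1)$ chunks inside a common $Q^{n-1}$-component of $F_{n-1}$, whose diameter and separation are already tightly controlled at level $n-1$, and a careful bookkeeping of these subchunk geometries rules out the problematic overlap configurations. Working through this bookkeeping is where the bulk of the technical effort lies.
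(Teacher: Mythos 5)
First, a framing note: the paper does not contain its own proof of this lemma. Immediately after the statement it writes ``The proof is given in Ref.~\cite{Bravyi13b}, see proposition 7,'' so there is no internal argument to compare against, and your proposal has to be judged on its own merits.

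Your base case and your treatment of the disjoint-chunk case are both correct. However, you have mislocated the difficulty, and the mechanism you sketch for the overlapping case does not work. For the diameter bound, overlap is not an obstacle at all and no induction is needed: if $u,w\in M$ and the level-$n$ chunks $C_u\ni u$ and $C_w\ni w$ share a site $x$, then
\begin{equation*}
d(u,w)\le d(u,x)+d(x,w)\le \mathrm{diam}(C_u)+\mathrm{diam}(C_w)\le Q^n .
\end{equation*}
So one simply walks along a $Q^n$-chain in $M$ to the first site $w$ with $d(u,w)>Q^n$ (hence $d(u,w)\le 2Q^n$), and both the disjoint case (union is a level-$(n+1)$ chunk, contradicting $u\in F_n$) and the overlapping case (forces $d(u,w)\le Q^n$) yield immediate contradictions. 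Likewise, for the separation bound in the sub-case $w\in F_n$, overlap of $C_u$ and $C_w$ gives $d(u,w)\le Q^n$, which forces $w$ into the same $Q^n$-connected component as $u$, i.e.\ $w\in M$ --- again immediate, with no level-$(n-1)$ bookkeeping.

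The genuinely delicate sub-case is $w\in E_{n+1}$ with $C_u$ overlapping the chosen level-$n$ chunk containing $w$, and here your plan contains a real gap. You write that ``a shared site produces two level-$(n-1)$ chunks inside a common $Q^{n-1}$-component of $F_{n-1}$,'' but a shared site $x\in C_u\cap C_w$ lies in $E_n$, and since $F_{n-1}=E_{n-1}\setminus E_n$ this means $x\notin F_{n-1}$. More generally the level-$(n-1)$ halves of $C_u$ and $C_w$ are subsets of $E_n$, which is disjoint from $F_{n-1}$, so the inductive hypothesis --- which constrains $Q^{n-1}$-connected components of $F_{n-1}$ --- says nothing about them. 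The proposed descent therefore does not engage the objects you need it to, and this sub-case requires a different mechanism (for instance, an argument that directly compares $C_u$ against the two halves of the level-$(n+1)$ chunk containing $w$, or a sharper choice of which level-$n$ chunk to extract from that level-$(n+1)$ chunk). As written, the ``careful bookkeeping'' step does not go through.
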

The proof is given in Ref.~\cite{Bravyi13b}, see proposition 7. We note also that all of the defects created by a $Q^n$-connected component of $F_n$ lying in the 1-neighbourhood of the connected component are neutral. With this result, it is then possible to show that a renormalisation group decoder that finds and neutralises neutral $2^p$-connected components at sequentially increasing length scales $p$, will successfully correct an error provided $Q^m $ is much smaller than the size of the system. A threshold is then obtained using results from percolation theory~\cite{Grimmett99, Gacs86, Gray01} that show that for sufficiently low error rate, the likelihood that a level-$m+1$ chunk will occur is vanishingly small. The renormalisation-group decoder is defined as follows.

\begin{defn}[Renormalisation-group decoder]
The renormalisation group decoder takes a syndrome $\sigma(E)$ as input and sequentially calls the level-$p$ error-correction subroutine $\textsc{Error Correct}(p)$ and applies the Pauli operator returned from the subroutine for $p = 0,\,1,\dots ,\, m$ with $m \sim \log L$. 
\end{defn}

The subroutine $\textsc{Error Correct}(p)$ returns correction operators for neutral $2^p$-connected subsets of the syndrome. If the syndrome has not been neutralised after $\textsc{Error Correct}(m)$ has been called the decoder reports failure.

\subsubsection{A threshold theorem with a spread error}
\label{SubSec:ThresholdTheorem}

In the following subsection we will show that the just-in-time gauge-fixing process will spread each disjoint $Q^j$-connected component of $F_j$ such that the linear size of the area it occupies will not increase by more than a constant factor $ s \ge 1 $. Once the error is spread during the gauge-fixing process we must show that the error remains correctable. Here we show that the spread error model can be corrected globally with the renormalisation-group decoder. We first define a level-$m$ spread error.

\begin{defn}[Spread errors]
Take a level-$m$ error $E$ drawn from an independent and identically distributed noise model with a chunk decomposition as in Eqn.~(\ref{Eqn:Chunk}). The spread error takes every $Q^j$-connected component $F_{j,\alpha} \subseteq F_j$ for all $j$, and spreads it such that this component of the error, together with the defects it produces, are supported within a container $C_{j,\alpha}$ centred at $F_{j,\alpha}$ with diameter at most $sQ^j$.
\end{defn}
We use the term `container' so we do not confuse them with boxes used in the following subsection, although containers and boxes both perform similar tasks in the proof.

In the proof given in Ref.~\cite{Bravyi13b} the authors make use of Lemma~\ref{Lemma:Diameter} to show that the renormalisation-group decoder will not introduce a logical failure. This is assured given that all of the errors are small and well separated in a way that is made precise by Lemma~\ref{Lemma:Diameter}. With the errors of the spread error model now supported in containers as much as a factor $s$ larger than the initial connected components of the error, the connected components are now much closer together, and in some cases overlap with one another. We have to check that the noise will not introduce a logical failure given sufficiently low noise parameters. We will argue that we can still find a threshold error rate provided $(s+2)Q^m$ is suitably small compared with the system size. The following definition will be helpful.

\begin{defn}[Tethered]
Consider errors supported within spread containers $C_{j, \alpha}$ and $C_{k,\beta}$ with $j \le k$. We say the error in container $C_{j,\alpha}$ is tethered to the error in a different container $C_{k, \beta}$ if the two containers are separated by a distance no greater than $\Delta_j$ where $\Delta_j = [r(s+2)+2]Q^j$. We say that $C_{j,\alpha}$ is untethered if it is not tethered to any containers $C_{k,\beta}$ for $k \ge j$. 
\end{defn}

We include an $r$ term to parameterise the separation we wish to maintain between untethered containers compared to the diameter of the containers. This should be of the order of the factor by which renormalisation-group decoder increases its search at each level. We defined the renormalisation-group decoder to search for $2^p$-connected components at level $p$, so we can take $r \ge 2$.

\begin{fact} 
Let $Q \ge 3[r(s+2) + s + 1]$. Two distinct containers of the same size, $C_{j,\alpha}$ and $C_{j,\beta}$, are not tethered. \label{Fact:Boxes}
\end{fact}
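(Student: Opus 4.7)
The plan is a direct triangle-inequality argument in the $\ell_\infty$ metric, combining Lemma~\ref{Lemma:Diameter} with the size bounds built into the definition of the containers. Since $F_{j,\alpha}$ and $F_{j,\beta}$ are distinct $Q^j$-connected components of $F_j$, Lemma~\ref{Lemma:Diameter} gives $d(F_{j,\alpha},F_{j,\beta})>Q^{j+1}/3$, while each component itself has diameter at most $Q^j$.

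I then convert the container data into a bound on the distance from a container point to its underlying chunk. Because $C_{j,\alpha}$ is centred on $F_{j,\alpha}$ with diameter at most $sQ^j$, while $F_{j,\alpha}$ has diameter at most $Q^j$, the container extends symmetrically beyond $F_{j,\alpha}$ by at most $(sQ^j-Q^j)/2$ on each side, so every $x\in C_{j,\alpha}$ satisfies $d(x,F_{j,\alpha})\le (s-1)Q^j/2$, with an analogous bound for $C_{j,\beta}$. For arbitrary $x\in C_{j,\alpha}$ and $y\in C_{j,\beta}$, choosing nearest-point representatives $f\in F_{j,\alpha}$ and $g\in F_{j,\beta}$, the triangle inequality gives
\begin{equation*}
d(x,y)\ge d(f,g)-d(x,f)-d(y,g)>\frac{Q^{j+1}}{3}-(s-1)Q^j.
\end{equation*}

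Substituting the hypothesis $Q\ge 3[r(s+2)+s+1]$, i.e., $Q^{j+1}/3\ge[r(s+2)+s+1]\,Q^j$, yields
\begin{equation*}
d(C_{j,\alpha},C_{j,\beta})>[r(s+2)+s+1]\,Q^j-(s-1)Q^j=[r(s+2)+2]\,Q^j=\Delta_j,
\end{equation*}
so the two containers cannot be tethered. There is no real obstacle here; the only step worth flagging is that the \emph{symmetric} centring of each container is what lets the spread correction enter as $(s-1)Q^j/2$ per side rather than $sQ^j$ overall, and this saving is precisely what makes the chosen growth rate $Q/3\ge r(s+2)+s+1$ tight enough to exceed $\Delta_j/Q^j=r(s+2)+2$ after both subtractions.
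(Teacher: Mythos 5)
Your proof is correct and takes essentially the same route as the paper: both invoke Lemma~\ref{Lemma:Diameter} for the $Q^{j+1}/3$ separation of the underlying chunks and then subtract $(s-1)Q^j$ for the symmetric expansion from chunk to container, arriving at the same inequality $\Delta_j \le Q^{j+1}/3 - (s-1)Q^j$ under $Q \ge 3[r(s+2)+s+1]$. Your triangle-inequality derivation simply unpacks the paper's one-line ``after expansion the boundaries are separated by more than $Q^{j+1}/3-(s-1)Q^j$'' step.
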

\begin{proof}
Errors $F_{j,\alpha}, \, F_{j,\beta} \subseteq F_j$ at the centre of spread errors contained in containers $C_{j,\alpha}$ and $C_{j, \beta}$ are separated by more than $Q^{j+1} / 3$; Lemma~\ref{Lemma:Diameter}. After expansion the boundaries of $C_{j,\alpha}$ and $C_{j,\beta}$ are separated by a distance greater than $Q^{j+1} / 3 - (s-1)Q^j$. We have $ \Delta_j \le Q^{j+1} / 3 - (s-1)Q^j $ for $Q \ge 3[r(s+2) + s+1]$. Therefore two boxes of the same size are not tethered for $Q \ge 3[r(s+2)+s+1]$.
\end{proof}

The constant expansion of the diameter of the errors means that some large errors expand such that smaller errors are not locally corrected. Instead they become tethered to the larger errors that may cause the renormalisation-group decoder to become confused. We will show that the small errors that are tethered to larger ones are dealt with at larger length scales as tethering remains close to the boundary of the larger containers with respect to the length scale of the larger container. We illustrate this idea in Fig.~\ref{Fig:NotToScale}.

\begin{figure}
\includegraphics{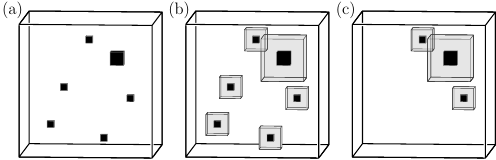}
\caption{\label{Fig:NotToScale} Not to scale. A diagram sketching the proof of a threshold for the controlled-controlled-phase gate. (a)~An error described by the chunk decomposition acting on the qubits included on the spacetime of the controlled-controlled-phase gate. See Lemma~\ref{Lemma:Diameter}. The image shows connected components of the error contained within black boxes. Errors are shown at two length-scales. One error at the larger length scale is shown to the top right of the image. (b)~After just-in-time gauge fixing is applied, errors are spread by a constant factor of the size of the connected components. This is shown by the grey regions around each of the initial black errors. (c)~Given a sufficiently large $Q$ the spread is not problematic since smaller untethered spread errors are far away from other components of equal or greater size. They are therefore easily dealt with by the renormalisation-group decoder. Small components of the error that lie close to a larger error will be neutralised with the larger error close to its boundary.}
\end{figure}

We will say that a decoder is successful if it returns a correction operator that is equivalent to the error operator up to an element of the stabilizer group. Given that the logical operators of the model of interest are supported on containers with diameter no smaller than $L$, we say that a decoder is successful if an error and its correction is supported on a collection of well separated containers where each container is smaller than $L / 3$. It will be helpful to define fattened containers $\tilde{C}_{j,\alpha}$ that enclose the $Q^j$-neighbourhood of $C_{j,\alpha}$. The fattened containers have diameter $D_j \le (s+2)Q^j$. We also define the correction operator $R(p) $ which is the product of the correction operators returned by $\textsc{Error Correct}(p)$ for all levels up to level $p$. We are now ready to proceed with the proof.

\begin{lem} \label{Lemma:RG}
Take $Q \ge 3[r(s+2) + s + 1]$. The renormalisation-group decoder will successfully decode a level-$m$ error with constant spread-factor $s \ge 1$ provided $ D_m < L/3 $.
\end{lem}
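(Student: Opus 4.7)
The plan is to adapt the Bravyi--Haah induction argument used in Ref.~\cite{Bravyi13b} to the spread-error setting, with tethering serving as the bookkeeping device for tracking how smaller spread containers may attach themselves to larger ones without escaping fattened neighbourhoods. I would prove by induction on $p = 0, 1, \ldots, m$ the following statement: after the decoder has applied the cumulative correction $R(p)$, the residual syndrome is supported entirely inside $\bigcup_{k > p, \alpha} \tilde{C}_{k, \alpha}$, and every uncorrected level-$j$ container with $j \le p$ lies inside some fattened container $\tilde{C}_{k,\beta}$ with $k > p$ to which it is tethered.

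The base case (no correction applied) is immediate because $E \subseteq \bigcup_{j, \alpha} C_{j, \alpha} \subseteq \bigcup_{j, \alpha} \tilde{C}_{j, \alpha}$ and the spread-error definition places every defect inside the container that carries its source chunk. For the inductive step at level $p$, Fact~\ref{Fact:Boxes} says distinct level-$p$ containers are never tethered to each other, so they are separated by more than $\Delta_p = [r(s+2)+2] Q^p$; because $r \ge 2$ and $Q \ge 3[r(s+2)+s+1]$, this separation dominates the search radius $2^p$ used by $\textsc{Error Correct}(p)$. Partition the level-$p$ containers into \emph{untethered} and \emph{tethered}. Each untethered $C_{p,\alpha}$ carries the spread of a single $Q^p$-connected chunk of $F_p$, which by Lemma~\ref{Lemma:Diameter} has neutral syndrome, and by the $\Delta_p$ bound this cluster is isolated from every other residual defect by more than $2^p$. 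Thus $\textsc{Error Correct}(p)$ finds it, matches it internally, and removes it.

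For a tethered container $C_{p,\alpha}$ attached to some $C_{k,\beta}$ with $k > p$, the distance from $C_{p,\alpha}$ to $C_{k,\beta}$ is at most $\Delta_p$, and the diameter of $C_{p,\alpha}$ itself is at most $s Q^p$. The chosen bound on $Q$ forces $\Delta_p + s Q^p < Q^k$, so $C_{p,\alpha} \subseteq \tilde{C}_{k,\beta}$; in particular its defects and any partial correction contributed at level $p$ stay inside $\tilde{C}_{k,\beta}$. Consequently, the induction hypothesis is preserved at level $p+1$. Iterating to $p = m$, every level-$m$ container is untethered (level $m{+}1$ is empty by definition of $m$), so $\textsc{Error Correct}(m)$ processes each one inside its fattened container of diameter $D_m$.

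Finally, I would verify that the total correction $R(m)$ combined with $E$ is supported on a disjoint union of fattened level-$m$ containers, each of diameter $D_m < L/3$. Since any nontrivial logical representative has support of diameter at least $L$, no such operator can fit inside any single $\tilde{C}_{m,\alpha}$, and containers of smaller levels are absorbed into these. Therefore $R(m) E$ is a stabilizer and decoding succeeds. The principal obstacle I expect is the tethering containment step: I need $C_{p, \alpha} \subseteq \tilde{C}_{k,\beta}$ to hold not just for direct tetherings but robustly across the inductive sequence, since \textsc{Error Correct} at intermediate levels may shuffle partial corrections around inside parent fattened containers. This is precisely what the bound $Q \ge 3[r(s+2)+s+1]$ is engineered to supply, giving the geometric slack that separates same-level containers (via Fact~\ref{Fact:Boxes}) while simultaneously keeping tethered small containers well inside the fattened neighbourhood of the larger container to which they are attached.
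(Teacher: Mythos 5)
Your overall strategy matches the paper's: both prove the lemma by induction on the level of the renormalisation-group decoder, using tethering as the bookkeeping device to explain how small spread containers attach to, and remain contained within, fattened neighbourhoods of larger ones, with untethered containers being corrected cleanly at the appropriate level. Your three-part invariant (correction inside fattened containers, residual syndrome inside larger containers, clean correction on untethered containers) is essentially the same triple of conditions the paper tracks.

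However, there is a genuine gap in your proposal: you conflate the chunk level $j$ (which governs container diameter $\sim Q^j$) with the RG decoder level $p$ (which governs search radius $2^p$). These live on different scales when $Q$ is large, as it is here ($Q \ge 33$): a level-$j$ container has diameter $\sim (s+2)Q^j$, so $\textsc{Error Correct}(j)$ with search radius $2^j$ cannot possibly ``find and match internally'' a cluster that wide. You wrote the induction over ``$p = 0,\ldots,m$'' as though these were the same index, and then claimed the untethered $C_{p,\alpha}$ is dealt with by $\textsc{Error Correct}(p)$; this step fails for any $Q > 2$. The paper avoids this by keeping the two scales distinct: it induction over the decoder level $p$, considers chunk levels $j$ such that $2^p < D_j \le 2^{p+1}$, and introduces the auxiliary index $l$, the smallest integer with $Q^l > 2^p$, to relate them. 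This is not just a relabelling: the inequalities $D_j + \Delta_j \le Q^{j+1}$ and $r(s+2)Q^j > 2^{p+1}$ that make the argument close hold precisely because one tracks the correct decoder level at which a container of a given chunk level is processed.

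You also flagged (but did not resolve) the concern that intermediate $\textsc{Error Correct}$ calls might shuffle partial corrections outside the parent fattened container. The paper's resolution is quantitative: for a tethered $C_{j,\alpha}$ it shows the error supported on $\tilde{C}_{j,\alpha}$ lies entirely in the $(D_j + \Delta_j)$-neighbourhood of the larger $C_{k,\beta}$, and that $D_j + \Delta_j \le Q^{j+1}$ under the hypothesis on $Q$; for untethered containers, the gap $\Delta_j - 2Q^j = r(s+2)Q^j$ exceeds the search radius so no stray corrections are introduced. Both estimates would need to be made explicit in your argument to close the loop.
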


\begin{proof}
We follow the progression of the renormalisation-group decoder inductively to show the correction is supported on the union of containers $\tilde{C}_{j,\alpha}$. We will prove that the renormalisation-group decoder satisfies the following conditions at each level $p$.

\begin{enumerate}
\item The correction operator $R(p)$ returned at level $p$ is supported on the union of fattened containers $\tilde{C}_{j,\alpha}$. \label{RG:step1}

\item For the smallest integer $l \ge 0$ such that $Q^l > 2^p$, modulo stabilizers, the error $R(p)E$ is supported within a $Q^l$-neighbourhood of an error contained in a container $C_{k,\alpha}$ for any $k$ such that its diameter is at least $s Q^l $. \label{RG:step3}

\item The restriction of $E$ and the level-$p$ correction operator $R(p)$ is the same up to stabilizers on fattened containers $\tilde{C}_{j,\alpha}$ of diameter $D_j \le 2^p$ for untethered containers $C_{j,\alpha}$.  \label{RG:step2}

\end{enumerate}

We prove the case for $p=0$. By definition, errors are supported on containers $C_{j,\alpha}$, therefore 1-connected components of the syndrome contained within $C_{j,\alpha}$ are supported on $\tilde{C}_{j,\alpha} $. This verifies condition~\ref{RG:step1}. Condition~\ref{RG:step3} holds by definition as follows. Since $Q^1 > 1$, tethered containers $C_{0, \alpha}$ of size no greater than $s$ are separated from at least one container $C_{j,\beta}$ for $j \ge 1$ by a distance no more than $\Delta_0$, otherwise it is untethered. This verifies that all tethered containers $C_{0,\alpha}$ lie entirely within the $Q$-neighbourhood of some container $C_{j,\beta}$ since $ s + \Delta_0 \le Q $. The containers $C_{j,\beta}$ that tether the errors in containers $C_{0,\alpha}$ are necessarily such that $j > 0$ by Fact~\ref{Fact:Boxes}. This verifies Condition~\ref{RG:step3} as we have shown containers $C_{0,\alpha}$ are only tethered to containers with diameter at least $sQ$. Condition~\ref{RG:step2} is trivial for $p = 0$ since all containers have diameter larger than 1.

We now suppose that the above conditions are true for $p$ to show that the conditions hold at $p+1$. We consider $\textsc{Error Correct}(p+1)$. We are interested in containers $C_{j,\alpha}$ such that the diameter of its fattened counterpart is such that $2^p < D_j \le 2^{p+1}$. We first find the smallest integer $l$ such that $Q^l > 2^{p+1}$. Since $ D_j = (s+2)Q^j \le 2^{p+1}$ we have $l \ge j+1$. There are two possible outcomes depending on whether $C_{j,\alpha}$ is tethered or not. We deal with each case separately.

If $C_{j,\alpha}$ is tethered it lies at most $\Delta_j $ from another container $C_{k,\beta}$ of diameter $sQ^k$ with $k > j$ by Fact~\ref{Fact:Boxes}. Given $\tilde{C}_{j,\alpha}$ has diameter no greater than $D_j $, we find that the error supported on $\tilde{C}_{j,\alpha}$ is supported entirely within the $ ( D_j + \Delta_j )$-neighbourhood of $C_{k, \beta}$. Expanding this expression we have that $ D_j + \Delta_j \le Q^{j+1} $ for $ Q \ge [(s+2) + r(s+2) + 2] $. This confirms condition~\ref{RG:step3} for error correction at level $p+1$.

In the case $C_{j,\alpha}$ is untethered, the fattened container $\tilde{C}_{j,\alpha}$, which is $D_j$-connected, is separated from all other containers that support uncorrected errors $\tilde{C}_{k,\beta}$ with $D_k \ge D_j$ by a distance greater than $\Delta_j - 2Q^j = r(s+2)Q^j$ by the definition of an untethered container. Given that $ D_j > 2^p $ we have that $r ( s+2 ) Q^j > 2^{p+1}$ for $r = 2$ at the level-$(p+1)$ error-correction subroutine. Therefore $\textsc{Error Correct}(p+1)$ will not find any components of $E$ outside of the container $\tilde{C}_{l,\alpha}$. As such a correction will be returned entirely on $\tilde{C}_{l,\alpha}$, verifying condition~\ref{RG:step2}.

We finally consider the support of the correction operator. If the error is tethered, the correction returned for $C_{j,\alpha}$ lies on some container $\tilde{C}_{k,\beta}$ with $k > j$ to which it is tethered. In the case of untethered errors the correction for each connected component supported on $C_{j,\alpha}$, and the correction for the smaller components tethered to it, is supported on its respective container $\tilde{C}_{j,\alpha}$. This verifies condition~\ref{RG:step1}.
\end{proof}

The argument given above says that all errors are corrected on well-separated containers that are much smaller than the size of the system provided $ D_m < L/3$. Given that there are no level-$m+1$ errors, all of the errors supported on containers of size $D_m$ will be untethered, and therefore corrected at the largest length scale. Therefore we bound the failure probability by predicting the probability that an error of size $Q^{m+1}$ occurs. Ref.~\cite{Bravyi13b} gives a formula stating that the likelihood that a level-$m$ chunk occurs on an $L\times L \times L$ lattice is
\begin{equation}
p _ m \le L^3 (3Q)^{-6} ( 3Q p_0)^{2 ^m}.
\end{equation}
Demanding that $(s+2)Q^m < L / 3$ we find $m =  [ \log (L/3)  - \log(s+2)] / \log Q \approx \log L / \log Q$ we find the logical failure rate decays exponentially in $L$ provided $(3Q)^6 p_0 < 1$. This demonstrates a threshold for $p_0 < (3Q )^6$. Taking $Q = 87$ using $s = 8$ and $r = 2$, and we have that the number of qubits per site is $N = 120$ from~\ref{App:Microscopics}, we obtain a lower-bound on the threshold error rate of $\varepsilon \sim 6 \cdot 10^{-15}$.

\subsubsection{Just-in-time gauge fixing}
\label{SubSec:JustInTime}

We use a just-in-time decoder~\cite{Bombin18} to fix the gauge of each topological cluster state onto a copy of the surface code. We can deal with each of the three codes separately since the three codes are yet to interact. We suppose we draw an error from the independent and identically distributed noise model that acts on the spacetime that is represented by the sites of the topological cluster state, see Methods~\ref{SubSec:Lattices} for the definition of a site of the models of interest. Note that more than one defect can lie at a given site since each site supports several stabilizers. We also assume that the state of the two-dimensional surface code on the initial face is such that the plaquette operators are in their +1 eigenstate although small errors may have been introduced to the qubits on the primal qubits of the initial face of the system. We defined the initial face in the main text, see Fig.~\ref{Fig:Raussendorf}(b). We justify this assumption by showing how we fix the gauge of the two-dimensional input system in the following SubSection.

We briefly review the gauge fixing problem that we already summarised in the main text. Face measurements that we obtain by measuring the dual qubits of the topological cluster state return random outcomes. However, due to the constraints among the stabilizers, these random outcomes are constrained to form loops if the system does not experience noise. To fix the gauge of the system we need only find a Pauli operator that restores the plaquettes to their +1 eigenstate. This correction can be obtained trivially by finding a Pauli operator that will move the loops to any smooth boundary that is far away from the initial face. Indeed, because the plaquettes at this boundary are initialised in the $+1$ eigenstate, we cannot terminate loops here. However, any other boundary is suitable. With the two-dimensional setup we have it is perhaps a natural choice to move the loops towards the terminal face. Up to a stabilizer, this correction will fill the interior of the loop. Ensuring that the initial face is fixed means that the correction for the gauge-fixing process is unique. Otherwise, there can be two topologically distinct corrections from the gauge-fixing process that can lead to a logical fault.

In the case that errors occur when we measure the dual qubits, strings will appear in incorrect locations. Given that in the noiseless case the loops should be continuous, we can identify errors by finding the locations where strings terminate. We refer to the end point of a broken string as a defect. Defects appear in pairs at the two endpoints of a given string. Alternatively, single defects can be created at a smooth boundary. We attempt to fix the gauge where the errors occur by pairing local defects to close the loops, or we move single defects to smooth boundaries to correct them. We then correct the gauge according to the corrected loop. However, we cannot guarantee that we corrected these loops perfectly, and the operator we apply to fix the gauge will appear as an error. Up to stabilizers, the error we apply during the gauge-fixing procedure will be equivalent to an error that fills the interior of the error loop. These errors are problematic after the transversal non-Clifford gate is applied. However, provided these errors are sufficiently small, we can correct them at a later stage of the error-correction process.

Correcting broken loops becomes more difficult still when we only maintain a two-dimensional layer of the three-dimensional system as it will frequently be the case that a single defect will appear that should be paired to another that appears later in the spacetime but has not yet been realised. As such, we will propagate defects over time before we make a decision on how to pair it. This deferral will cause the loop to extend over the time direction of the system and this, in turn, will cause gauge-fixing errors to spread like the distance the defects are deferred. However, if we can make the decision to pair defects suitably quickly, we find that the errors we introduce during gauge fixing is not unmanageable. Here we propose a just-in-time decoder that we can prove will not extend the size of an error uncontrollably. We assume that the error model will respect the chunk decomposition described above, see Eqn.~(\ref{Eqn:Chunk}). We find that the just-in-time decoder will spread each error chunk by a constant factor of its initial size. We give some more notation to describe the error model before defining the just-in-time decoder and justifying that it will give rise to small errors at a suitably low error rate.

We remember that the chunk decomposition of the error $E = F_1 \cup F_2 \cup \dots \cup F_m$ is such that a $Q^j$-connected component of $F_j$ has diameter no greater than $Q^j$ and is separated from all other errors in $E_j$, see Eqn.~(\ref{Eqn:ErrorDecomp}), by more than $Q^{j+1} / 3$. We also define the syndrome of the error $\sigma(E)$, i.e., the defects that appear due to error $E$. We also have that the error, together with its syndrome, is contained in a box $B_{j,\alpha}$ of diameter at most $Q^j +2$ to include syndromes that lie at the boundary of a given site where $F_{j,\alpha}$ is the $Q_j$-connected component of $F_j$.

We denote defects, i.e., elements of $\sigma(E)$ with coordinates $u$ according to their site. A given defect has a time coordinate $u_t$, and a two-dimensional position coordinate $u_x$ in the three-dimensional spacetime. We denote the separation between two defects $u$ and $v$ in spacetime by $|u - v|$ according to the $\ell_\infty$ metric. We also denote their temporal(spatial) separation by $|u_t - v_t|$($|u_x - v_x|$). At a given time $t$ which progresses as we prepare more of the topological cluster state, we are only aware of all defects $u$ that have already been realised such that $u_t \le t$. We neutralise the defects of the syndrome once we arrive at a time where it becomes permissible to pair them, otherwise we defer their pairing to a later time. Deferral means leaving a defect in the current time slice of the spacetime by extending the string onto the current time without changing the spatial coordinate of the defect. When we decide to pair two defects, we join them by completing a loop along a direct path on the available live qubits. In both cases we fix the gauge according to the strings we have proposed with the correction or deferral. We are now ready to define the just-in-time decoder that will accurately correct pairs of defects given only knowledge about defects $u$ where $u_t \le  t$.

\begin{defn}[Just-in-time decoder] \label{Def:JIT}
The just-in-time decoder, $\textsc{Just In Time}(t)$, is applied at each time interval. It will neutralise pairs of defects $u$ and $v$ if and only if both defects have been deferred for a time $\delta t  \ge |u - v|$. It will pair a single defect $u$ to a smooth boundary only if $u$ has been deferred for a time equal to its separation from the boundary.
\end{defn}

The definition we give captures a broad range of just-in-time decoders that could be implemented a number of ways. We could, for instance, consider clustering decoders~\cite{Harrington04, Bravyi13b, Anwar14, Watson14}, greedy decoders~\cite{Wootton15a} or possibly more sophisticated decoders based on minimum-weight perfect matching~\cite{Edmonds65, Dennis02, Raussendorf06} to implement the decoder. Here we only offer a simple rule that we can use to demonstrate a threshold within the coarse-grained picture of the chunk decomposition. We also remark that we might be able to find better decoders that do not satisfy the conditions of the just-in-time decoder proposed here. We make no attempt to optimise this, the goal here is only to prove the existence of a threshold using the simplest possible terms.

Before we show that the just-in-time decoder will introduce a spread error with a constant spread factor $s$ we first consider how the decoder performs if we consider only a single $Q^j$-connected component of the error $F_{j,\alpha} \subseteq F_j$. We first consider the $Q^j$-connected component of the error well isolated in the bulk of the lattice, and then we consider how it is corrected close to the boundary.

\begin{fact} \label{Fact:IsolatedBox}
The correction of an isolated $Q^j$-connected component of the error, $F_{j,\alpha}$, that lies more than $2(Q^j+2)$ from the boundary, is supported on the $(Q^j+1)$-neighbourhood of $B_{j,\alpha}$. No defect will exist for a time longer than $\delta t \sim 2(Q^j+1)$.
\end{fact}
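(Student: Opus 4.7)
The plan is to track the action of the just-in-time decoder (Definition~\ref{Def:JIT}) on a single $Q^j$-connected component $F_{j,\alpha}$ under the assumption that no other errors are nearby. Because $F_{j,\alpha}$ together with its syndrome lies inside a box $B_{j,\alpha}$ of diameter at most $Q^j+2$, every pair of defects $u,v\in\sigma(E)$ produced by this chunk satisfies $|u-v|\le Q^j+2$. This single inequality will drive both halves of Fact~\ref{Fact:IsolatedBox}.

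First I would derive the temporal bound. Let $t_{\min}$ and $t_{\max}$ be the earliest and latest appearance times of defects of this chunk; they differ by at most $Q^j+2$ since the box has this diameter in spacetime. At any time $t\ge t_{\max}+(Q^j+2)$, every defect in the chunk has been deferred for at least $Q^j+2\ge|u-v|$, so the pairing rule of the just-in-time decoder is satisfied for every intra-chunk pair. Hence no defect waits longer than $\delta t\le 2(Q^j+2)\sim 2(Q^j+1)$, which is the second statement of the fact.

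Next I must rule out the decoder reaching outside $B_{j,\alpha}$ before the intra-chunk pairings become eligible. Isolation of $F_{j,\alpha}$ inside $E_j$ (Lemma~\ref{Lemma:Diameter}) gives a spacetime separation of more than $Q^{j+1}/3$ from any other defect. Even after accounting for the deferral bound just obtained, the effective spacetime distance between a deferred defect of $F_{j,\alpha}$ and any external defect stays above $Q^{j+1}/3-2(Q^j+2)$, which for $Q$ moderately large exceeds $Q^j+2$. Thus the pairing rule fires on intra-chunk pairs strictly earlier than on any cross-chunk pair, and isolation in the bulk also precludes competing pairings to a smooth boundary. Because $\sigma(E)\cap B_{j,\alpha}$ is neutral, some valid intra-chunk matching exists, and any two such matchings inside the expanded region differ only by loops, i.e.\ by stabilisers of the three-dimensional surface code.

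Finally the spatial-support claim follows by direct inspection of the strings the decoder draws. Each paired string is the concatenation of the original error arc inside $B_{j,\alpha}$, two purely temporal extensions of length at most $2(Q^j+2)$ produced by deferral, and a direct spatial path of length at most $Q^j+2$ on the live layer at the pairing time. The gauge-fixing correction fills the interior of the resulting spacetime loop, so its support lies in the $(Q^j+1)$-neighbourhood of $B_{j,\alpha}$, up to the additive constants absorbed by the "$\sim$" in the statement. The main obstacle I foresee is the cross-chunk book-keeping in the deferred regime: one has to verify that the order in which the pairing rule becomes eligible really forces local matching before any spurious long-range match. This is exactly the place where a constant lower bound on $Q$ is needed, and it dovetails with the lower bound on $Q$ required in Lemma~\ref{Lemma:RG}.
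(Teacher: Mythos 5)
Your proof is correct and takes essentially the same approach as the paper's: bound the spacetime separation of any two defects of the isolated chunk by the box diameter $Q^j+2$, note the pairing rule becomes eligible once the later defect has been deferred for that long, which makes the earliest defect wait at most $2(Q^j+2)$, and observe that the resulting string loop (and hence the interior-filling correction) extends at most $\sim Q^j$ beyond $B_{j,\alpha}$ in the temporal direction. The paper's proof is a terse two-sentence version of your first and third paragraphs, phrased in terms of two extremal defects $u,v$ with $u_t-v_t = Q^j+2$.

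One remark: the cross-chunk eligibility analysis in your middle paragraph — showing that the intra-chunk pairing rule fires before any pairing to a distant chunk or to a spurious boundary — is not part of the paper's proof of this Fact. The paper's "isolated" hypothesis is taken to mean the chunk can be analysed in the absence of any competing defects, and the interaction between distinct boxes is deferred entirely to Lemma~\ref{Lemma:JIT} (where the bound $3(Q^j+2)\le Q^{j+1}/3-2$, forcing $Q\ge 33$, is established). So your proof does a bit of Lemma~\ref{Lemma:JIT}'s work early; this is not wrong, just redundant for the statement of this Fact. Also, a minor imprecision: the two temporal extensions are not both of length $2(Q^j+2)$ — the earlier defect waits up to $2(Q^j+2)$ but the later one waits only $\le Q^j+2$ — though this does not change the support bound, since what matters is that the pairing time lies at most $\sim Q^j$ beyond the top of the box.
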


\begin{proof}
Consider two defects $u,\,v $ contained in $B_{j,\alpha}$ at extremal points. These defects have separation at most $Q^j+2$. Let us say that $| u_t - v_t | = Q^j+2$ with $u_t > v_t$. The defect $v$ will be deferred for a time $2(Q^j+2)$ before it is paired a distance $Q^j+1$ from $B_{j,\alpha}$ in the temporal direction. This correction is supported on the $(Q^j+1)$-neighbourhood of $B_{j,\alpha}$. All defects of this component of the error will be paired before it is permissible to pair them to the boundary.
\end{proof}

By this consideration we obtain a constant spread parameter $\sim 3$ for boxes in the bulk of the model. We next consider the correction close to a smooth boundary. We find this will have a larger spread parameter.

\begin{fact} \label{Fact:BoxAtBoundary}
The correction of an isolated $Q^j$-connected component of the error, $F_{j,\alpha}$, produced by the just-in-time decoder is supported on the $ 3 ( Q^j + 2 ) $-neighbourhood of $B_{j,\alpha}$, if $ B_{j,\alpha} $ lies within $ 2 (Q^j +2)  $ of a smooth boundary. All defects will be neutralised after a time at most $3(Q^j+2)$.
\end{fact}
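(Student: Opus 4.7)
The plan is to extend the bulk argument of Fact~\ref{Fact:IsolatedBox} to accommodate the smooth boundary, which introduces a second mechanism by which the just-in-time decoder can neutralise a defect. I would proceed by a case analysis on how each defect produced by $F_{j,\alpha}$ is handled under Definition~\ref{Def:JIT}, and then take the worst of the resulting bounds as the answer.

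Since $F_{j,\alpha}$ is isolated from the rest of $E$, all of its defects lie inside $B_{j,\alpha}$ and the decoder's decisions on them decouple from other errors. For a defect $u\in B_{j,\alpha}$ that the decoder pairs with another defect $v\in B_{j,\alpha}$, Definition~\ref{Def:JIT} pairs them once the later of the two has been deferred for time $|u-v|\le Q^j+2$. A short calculation, exactly as in Fact~\ref{Fact:IsolatedBox}, shows that each of $u,v$ is deferred for at most $2(Q^j+2)$, and the resulting correction string lives in the $2(Q^j+2)$-neighbourhood of $B_{j,\alpha}$.

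For a defect $u\in B_{j,\alpha}$ that the decoder instead terminates at the boundary, I would bound its spatial distance $d_u$ from the nearest boundary site by the triangle inequality: combining the hypothesised distance of at most $2(Q^j+2)$ between $B_{j,\alpha}$ and the boundary with the diameter $Q^j+2$ of the box gives $d_u\le 3(Q^j+2)$. By Definition~\ref{Def:JIT}, $u$ is then deferred for exactly $d_u$ before being neutralised, so it persists for at most $3(Q^j+2)$ steps, and its correction string — a temporal deferral segment of length $d_u$ followed by a spatial segment of length $d_u$ running to the boundary — stays within the $3(Q^j+2)$-neighbourhood of $u$, hence of $B_{j,\alpha}$. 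Taking the maximum of the two cases yields both claimed bounds.

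The main subtlety is that the decoder does not know the ``true'' pairing induced by the microscopic error: it simply applies the two triggering rules of Definition~\ref{Def:JIT} in time order, and may terminate at the boundary a defect that could instead have been paired in the bulk, or vice versa. The bound is insensitive to this ambiguity, because the two worst-case distances — the diameter $Q^j+2$ of $B_{j,\alpha}$ and the worst-case spatial distance $3(Q^j+2)$ from $B_{j,\alpha}$ to the boundary — control the correction under any legal assignment. The hard part, such as it is, is simply confirming that these are the only two ways the decoder can neutralise a defect from an isolated near-boundary error, so that no defect persists for longer than its boundary-distance bound permits.
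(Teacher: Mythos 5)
Your proof is correct and follows essentially the same route as the paper: both arguments reduce the claim to bounding the worst-case distance from a defect in $B_{j,\alpha}$ to the smooth boundary, and both obtain $3(Q^j+2)$ by the same triangle-inequality combination of the box diameter $Q^j+2$ with the hypothesised boundary gap $2(Q^j+2)$. The only cosmetic difference is that you organise it as an explicit two-case analysis (bulk pairing vs.\ boundary termination) over each defect, whereas the paper narrates the single worst case in which one defect's partner is absorbed by the boundary first; the key estimate and conclusion are identical.
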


\begin{proof}
A defect $u$ lies at most $3(Q^j+2)$ from the boundary In the worst case all defects will be paired to the boundary after a time at most $3(Q^j+2)$. Considering a defect at an extremal location then, the just-in-time decoder may defer the correction of a defect beyond $B_{j,\alpha}$ at most $3(Q^j+2)$ in the temporal direction.
\end{proof}

The above fact gives a spread factor $s \sim 7$ for boxes close to the boundary. To be more specific, we might upper bound the spread factor with $s = 8$. So far we have only considered how the just-in-time decoder deals with well-isolated $Q^j$-connected components of the error. In fact, we find that for sufficiently large $Q$ all errors are well isolated in  a more precise sense. This is captured by the following lemma. We find that, given that any defect supported on a box $B_{j,\alpha}$ will be paired with another defect in the same box or to a nearby smooth boundary after a time at most $3(Q^j+2)$. It will never be permissible to pair defects contained in different boxes before they are terminated. In effect, all boxes are transparent to one another. This justifies the spread error model used in the previous subsection.

\begin{lem}
\label{Lemma:JIT}
Take a chunk decomposition with $Q \ge 33$. The just-in-time decoder will pair all defects supported on $ B_{j,\alpha}$ within the $3(Q^j+2)$-neighbourhood of $B_{j,\alpha}$ to either another defect in $B_{j,\alpha}$ or to the boundary.
\end{lem}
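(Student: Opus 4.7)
The plan is to use strong induction on the level $j$ of the chunk. The base case $j=0$ is Facts~\ref{Fact:IsolatedBox} and~\ref{Fact:BoxAtBoundary} applied to a single-site error, which gives the $3(Q^0+2)$-neighbourhood bound with deferral time at most $3(Q^0+2)$. For the inductive step I will carry forward the stronger statement that for every $k<j$ and every $Q^k$-connected component $B_{k,\beta}$ of $F_k$, every defect originating in $B_{k,\beta}$ is neutralised both within the $3(Q^k+2)$-neighbourhood of $B_{k,\beta}$ and within a deferral time of at most $3(Q^k+2)$. The deferral-time bound is essential because it is what certifies that defects from smaller chunks are already gone before they could interfere with $B_{j,\alpha}$.

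The substantive work is to show that in the presence of all other chunks, $B_{j,\alpha}$ still behaves as if isolated, so that Facts~\ref{Fact:IsolatedBox} and~\ref{Fact:BoxAtBoundary} apply verbatim. The geometric input is two applications of Lemma~\ref{Lemma:Diameter}: for any chunk $B_{k,\beta}\neq B_{j,\alpha}$ the separation from $B_{j,\alpha}$ exceeds $Q^{\min(j,k)+1}/3$, since for $k\le j$ one applies the lemma at level $k$ (using $B_{j,\alpha}\subseteq E_j\subseteq E_k$) and for $k>j$ at level $j$. Combined with $Q\ge 33$, this yields the numerical slack $3(Q^k+2)<Q^{k+1}/3$ for every $k\ge 0$ (the tightest case is $k=0$, which only requires $Q>27$), and it is this gap that drives the rest of the argument.

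I would then rule out any cross-chunk pairing made by the JIT decoder, splitting on the level $k$ of the second chunk. For $k<j$ the inductive hypothesis is decisive: a defect $v\in B_{k,\beta}$ is neutralised after a deferral of at most $3(Q^k+2)<Q^{k+1}/3\le |u-v|$, so $v$ has already been removed from the pool before the eligibility condition of Definition~\ref{Def:JIT} can be met for the pair $(u,v)$. For $k\ge j$ (including distinct level-$j$ components) no inductive hypothesis is available, so the argument has to close on itself using the JIT rule alone: any such cross-chunk pairing requires deferral $\ge |u-v|>Q^{j+1}/3>3(Q^j+2)$, but within the window $\delta t\le 3(Q^j+2)$ the only defects within reach of $u$ are the other defects of $B_{j,\alpha}$ and, at worst, a nearby smooth boundary. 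Facts~\ref{Fact:IsolatedBox} and~\ref{Fact:BoxAtBoundary} therefore apply inside that window and $u$ is matched internally before any cross-chunk pairing ever becomes eligible.

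The hard part is precisely this self-referential step for $k\ge j$: one has to verify that the JIT rule by itself prevents premature cross-chunk pairings in the short time window $3(Q^j+2)$, without leaning on an already-completed inductive conclusion at that level. The comfortable gap $3(Q^j+2)<Q^{j+1}/3$ forced by $Q\ge 33$ is exactly what makes this window effectively transparent. A secondary subtlety to check along the way is that deferred defects drift in the time direction by up to $3(Q^j+2)$ before being neutralised, but this drift is already absorbed into the $3(Q^j+2)$-neighbourhood bookkeeping and does not upset the separation estimates.
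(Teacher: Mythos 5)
Your argument is correct and relies on exactly the same ingredients as the paper — Facts~\ref{Fact:IsolatedBox} and~\ref{Fact:BoxAtBoundary}, the separation bound from Lemma~\ref{Lemma:Diameter}, and the numerical inequality $3(Q^j+2) \le Q^{j+1}/3 - 2$ forced by $Q\ge 33$ — but the paper avoids the strong induction and the case-split on $k<j$ versus $k\ge j$ by a WLOG ("let $B_{j,\alpha}$ be the box of smaller diameter"): since the JIT rule requires \emph{both} defects to have been deferred past $|u-v|$, it suffices to observe that the defect in the smaller box is neutralised locally within time $3(Q^j+2) < Q^{j+1}/3 - 2 < |u-v|$ and so never satisfies the cross-box eligibility condition, which handles every second-box level $k$ at once. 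Your inductive wrapper is sound but is extra scaffolding rather than a new idea, and it papers over the same first-bad-event reasoning you flag as the "self-referential" step rather than eliminating it.
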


\begin{proof}
By Facts~\ref{Fact:IsolatedBox} and~\ref{Fact:BoxAtBoundary} we have that all the defects of isolated boxes $B_{j,\alpha}$ are paired to another defect in $B_{j,\alpha}$ or to a nearby smooth boundary at most $2(Q^j+2)$ from $B_{j,\alpha}$ after a time no more than $3(Q^j+2)$.

We may worry that the just-in-time decoder may pair defects within disjoint boxes if they are too close together. We consider the permissibility of pairing $u$ contained within $B_{j,\alpha}$ to $v$ contained in $B_{k,\beta}$. For $Q \ge 33 $ such a pairing will never be permissible. We suppose that, without loss of generality, the diameter of $B_{j,\alpha}$ is less than or equal to the diameter of $B_{k,\beta}$.  Given that $B_{j,\alpha}$ is separated from $B_{k,\beta}$ by a distance greater than $Q^{j+1} /3 - 2$, it will not be permissible to pair $u$ with $v$ within the lifetime of $u$ before it is paired to a boundary or another defect in $B_{j,\alpha}$ provided $ 3(Q^j +2) \le Q^{j+1} / 3 - 2  $. This is satisfied for all $j \ge 0$ for $Q \ge 33$.
\end{proof}

This Lemma therefore justifies our spread factor $s = 8$ used in the previous Subsection.

\subsubsection{Gauge prefixing}

Finally, we assumed that we can reliably prepare the plaquette operators on the initial face of the two-dimensional surface code in their $+1$ eigenstate. We can tolerate small errors on the edges of the initialised surface code, but a single measurement error made on a plaquette can cause a critical error with the just-in-time decoder as it may never be paired with another defect. This will lead to a large error occurring during gauge fixing, see Fig.~\ref{Fig:Prefixing}(a). It is therefore important to identify any measurement errors on the face measurements of the initial face before the gauge fixing begins. We achieve this by prefixing the plaquettes of the initial face of the topological cluster state before the controlled-controlled-phase gate begins. We run the system over a time that scales with the code distance before we commence the controlled-controlled-phase gate procedure. In doing so we can identify measurement errors that may occur on the dual qubits of the topological cluster state using measurement data collected before we conduct the non-Clifford operation. Fig.~\ref{Fig:Prefixing}(b) shows the idea; the figure shows that measurement errors can be determined by looking at syndrome data on both sides of a plaquette. We need only look at one side, namely, the side of the initial face before just-in-time gauge fixing takes place.

\begin{figure}
\includegraphics{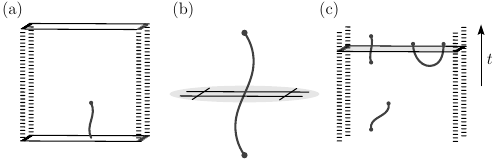}
\caption{(a)~A single measurement error on a face at the beginning of the controlled-controlled-phase gate will introduce a defect that may not be paired for a time that scales like the size of the system, this may introduce a macroscopic error after gaugefixing. (b)~We can determine where errors have occurred on the plaquettes of the initial face by looking at the defects before we begin the controlled-controlled-phase gate. (c)~We decode, or prefix the initial face, before we begin the controlled-controlled-phase gate to determine the locations of measurement errors on the initial face. \label{Fig:Prefixing}}
\end{figure}

Since we need only determine which face operators have experienced measurement errors and we do not need to actively correct the random gauge, gauge prefixing is accomplished globally using a renormalisation-group decoder on the three-dimensional syndrome data of the spacetime before the controlled-controlled-phase gate is performed.  A threshold can be proved by adapting the threshold theorem for topological codes given in Ref.~\cite{Bravyi13b}. Measurement errors close to the initial face before the controlled-controlled-phase gate takes place can then be identified easily by the decoder. We determine which plaquettes of the initial face have experienced errors by finding defects that should be paired to the initial face in the gauge prefixing operation. Small errors in the global gauge-prefixing procedure can be contained within the boxes that contain the error syndrome. As such, the errors that remain after the gauge prefixing procedure are confined within small boxes which respect the distribution we used to prove the threshold using the just-in-time decoder. As such we justify our error model used to bound the spread factor using just-in-time gauge fixing. We show an error together with its syndrome in Fig.~\ref{Fig:Prefixing}(c). The goal is only to estimate the plaquettes that have experienced measurement errors on the grey face at the top of the figure. This fixes the plaquettes of the initial face as we have assumed throughout our analysis.

We remark that the proposal given in Ref.~\cite{Bombin18} avoids the use of gauge prefixing by orienting boundaries such that the boundary that is analogous to the initial face of the color code is created over a long time. This orientation allows for single defects created at the initial face to be corrected by moving them back to the initial face at a later time, or onto some other suitable boundary. In contrast, here we have imagined that an initial face is produced at a single instant of time. Further work may show that we can apply the idea of Bomb\'{i}n to the surface code implementation of a controlled-controlled-phase gate presented here by reorienting the gate in spacetime. Such an adaptation will also require a modification of the just-in-time decoder to ensure that defects created at the initial face are paired to an appropriate boundary in a timely manner.

Conversely, gauge prefixing can be adapted for the proposal in Ref.~\cite{Bombin18}. In this work, color codes are entangled with a transversal controlled-phase gate. The transversal gate is applied to a two-dimensional support on boundaries of the two color codes undergoing this operation. Let us call this boundary the entangling boundary, where the initial face of the second code lies on the entangling boundary.  Let us briefly summarise how we can prefix the gauge of the initial face of the second of the two color codes by error correcting the first.

We note that the entangling operation allows us to use the eigenvalues of the error-detection measurements at the boundary of the first code to infer the values of the face operators at the initial face of the second code. Small errors may cause us to read the eigenvalues of the cells of the first code incorrectly. This will lead us to infer the wrong eigenvalues of the face operators of the initial face of the second code. However, error correction on the first code ensures that its entangling boundary is charge neutral, i.e., it has an even parity of string-like errors terminate at this boundary. If the first code is charge neutral at its entangling boundary, errors in the eigenvalues of the face operators of the initial face of the second color code are necessarily created in locally correctable configurations. This means they can be corrected without pairing any defects onto the initial face. This observation circumvents the need to orient the color code in a special configuration in spacetime. Relaxing this constraint may be of practical benefit. Moreover, the observation may allow us to remove certain rules that the decoder must otherwise respect to ensure defects are paired to the initial face as required. This may lead to improvements in the performance of the decoder.


\end{document}